\title{Participation Incentives in Approval-Based Committee Elections}
\author {
    Martin Bullinger\textsuperscript{\rm 1}\thanks{Most of this research was done when Martin Bullinger was a PhD student at the Technical University of Munich.},
    Chris Dong\textsuperscript{\rm 2},
    Patrick Lederer\textsuperscript{\rm 2}, 
    Clara Mehler\textsuperscript{\rm 2}
}
    \def\multiset#1#2{\ensuremath{\left(\kern-.3em\left(\genfrac{}{}{0pt}{}{#1}{#2}\right)\kern-.3em\right)}}
	\theoremstyle{definition}
	\theoremstyle{plain}
	\newtheorem{theorem}{Theorem}
	\newtheorem{corollary}{Corollary}
    \definecolor{TUMblue}{RGB}{0,101,189}
    \definecolor{TUMgreen}{HTML}{A2AD00} 
\newcommand{\sphrag}{\texttt{seqPhragm{\'e}n}\xspace}
\newcommand{\mes}{\texttt{MES}\xspace}
\newcommand{\pav}{\texttt{PAV}\xspace}
\newcommand{\spav}{\texttt{seqPAV}\xspace}
\newcommand{\ccav}{\texttt{CCAV}\xspace}
\newcommand{\sccav}{\texttt{seqCCAV}\xspace}
\newcommand{\av}{\texttt{AV}\xspace}
\newcommand{\sco}[3][A]{\mathrm{score}_{#2}(#1,#3)} 
\newcommand{\convexpath}[2]{
[   
    create hullnodes/.code={
        \global\edef\namelist{#1}
        \foreach [count=\counter] \nodename in \namelist {
            \global\edef\numberofnodes{\counter}
            \node at (\nodename) [draw=none,name=hullnode\counter] {};
        }
        \node at (hullnode\numberofnodes) [name=hullnode0,draw=none] {};
        \pgfmathtruncatemacro\lastnumber{\numberofnodes+1}
        \node at (hullnode1) [name=hullnode\lastnumber,draw=none] {};
    },
    create hullnodes
]
($(hullnode1)!#2!-90:(hullnode0)$)
\foreach [
    evaluate=\currentnode as \previousnode using \currentnode-1,
    evaluate=\currentnode as \nextnode using \currentnode+1
    ] \currentnode in {1,...,\numberofnodes} {
-- ($(hullnode\currentnode)!#2!-90:(hullnode\previousnode)$)
  let \p1 = ($(hullnode\currentnode)!#2!-90:(hullnode\previousnode) - (hullnode\currentnode)$),
    \n1 = {atan2(\y1,\x1)},
    \p2 = ($(hullnode\currentnode)!#2!90:(hullnode\nextnode) - (hullnode\currentnode)$),
    \n2 = {atan2(\y2,\x2)},
    \n{delta} = {-Mod(\n1-\n2,360)}
  in 
    {arc [start angle=\n1, delta angle=\n{delta}, radius=#2]}
}
-- cycle
}
\begin{document}

\maketitle

\begin{abstract}
    In approval-based committee (ABC) voting, the goal is to choose a subset of predefined size of the candidates based on the voters' approval preferences over the candidates. 
    While this problem has attracted significant attention in recent years, the incentives for voters to participate in an election for a given ABC voting rule have been neglected so far. 
    This paper is thus the first to explicitly study this property, typically called participation, for ABC voting rules.
    In particular, we show that all ABC scoring rules even satisfy group participation, whereas most sequential rules severely fail participation.
    We furthermore explore several escape routes to the impossibility for sequential ABC voting rules: we prove for many sequential rules that \emph{(i)} they satisfy participation on laminar profiles, \emph{(ii)} voters who approve none of the elected candidates cannot benefit by abstaining, and \emph{(iii)} it is $\NP$-hard for a voter to decide whether she benefits from abstaining.
\end{abstract}

\section{Introduction} 

Many questions in multi-agent systems reduce to the problem of selecting a subset of the available candidates based on the preferences of a group of agents over these candidates. Maybe the most apparent example for this are elections of parliaments or city councils, but there are also numerous applications beyond classical voting. 
For instance, this model can also be used to describe automated recommender systems \citep[][]{GaFa22a} or the selection of validators in a block chain \citep[][]{CeSt21a}. 
In the field of computational social choice, such elections are known as approval-based committee (ABC) elections and they have recently attracted significant attention \citep[][]{FSST17a,LaSk22b}. 
In more detail, this research studies \emph{approval-based committee (ABC) voting rules}, which choose a fixed-size subset of the candidates, typically called a \emph{committee}, based on the voters' approval ballots (i.e., voters express their preference about every candidate by either approving or disapproving her).

One of the basic premises of ABC voting rules (and, more generally, of all types of elections) is that voters will participate in the election. However, this is not necessarily in the interest of the voters: for example, for many single-winner voting rules, there are situations where voters prefer the outcome chosen when abstaining to the outcome chosen when voting \citep[e.g.,][]{Moul88b,Pere01a,BBGH18a}. 
This undesirable phenomenon, which is known as the \emph{no-show paradox}, entails that voting can be disadvantageous for a voter and hence disincentivizes participation. We are thus interested in voting rules that avoid this paradox, which are then said to satisfy \emph{participation}. 
Note that while related concepts have been analyzed \citep[e.g.,][Prop. A.3]{SaFi19a,LaSk22b}, participation has not been studied for ABC voting rules and we thus initiate the study of this axiom for ABC elections.

\paragraph{Our contribution.}
In this paper, we study the participation incentives of ABC voting rules. In more detail, we first investigate which ABC voting rules satisfy participation and prove that all ABC scoring rules (including all Thiele rules) even satisfy group participation. 
This generalizes the observation that scoring rules satisfy participation for single-winner elections and gives a strong argument in favor of Thiele rules. 
By contrast, we prove a general impossibility theorem, which shows that most ABC voting rules that sequentially compute the winning committees fail participation. 
In particular, our result implies that all sequential Thiele rules (except for approval voting) as well as the sequential variant of Phragm\'en's rule and the method of equal shares severely fail participation: there are situations where a voter only approves one of the elected candidates when she votes but all except one of the elected candidates when she abstains. 
These theorems also subsume results by \citet{SaFi19a} and \citet{LaSk22b} who study a monotonicity axiom that constitutes a special case of participation.

We furthermore analyze several approaches to circumvent this negative result for sequential rules. Firstly, motivated by the notion of strategyproofness for unrepresented voters by \citet{DDE+22a}, we show that many sequential ABC voting rules ensure that voters who do not approve any of the elected candidates cannot benefit by abstaining (\Cref{prop:unrep}). This result complements our impossibility theorem, which shows that voters can significantly benefit from abstaining when they approve at least one candidate and, moreover, demonstrates that many sequential rules satisfy at least a minimal degree of participation. 
Next, we prove that all sequential Thiele rules and the sequential Phragm\'en rule satisfy participation when restricting the domain to laminar profiles (\Cref{prop:laminar}). 
These profiles have been introduced by \citet{PeSk20a} and require that for all candidates $x$ and $y$, the respective sets of voters approving $x$ and~$y$ are either disjoint or related by subset inclusion. 
Hence, this result shows that sequential ABC voting rules satisfy participation when focusing on an important special case.

Finally, we show that it is \NP-hard for a voter to decide whether she benefits from abstaining when using sequential Thiele rules, sequential Phragm\'en, or the method of equal shares (\Cref{subsec:hardness}). Thus, even though a voter may benefit by abstaining, she may not be able to recognize it. Moreover, our technique for showing these hardness results is very universal and allows us to recover, strengthen, or complement existing hardness results by \citet{FGK22b} and \citet{JaFa23b}. 
In addition, our results indicate that many basic problems (e.g., whether there is a winning committee for which a given voter approves $\ell$ candidates) are \NP-hard for sequential rules.

\paragraph{Related Work.} The topic of ABC voting currently attracts significant attention and we refer to \citet{LaSk22b} for a recent survey. 
While there is, to the best of our knowledge, no explicit work on participation in ABC voting, there are a few closely related papers.
In particular, \citet{SaFi19a} study an axiom called support monotonicity with population increase (SMWPI), which requires that the abstention of a voter cannot result in a committee that contains all of her approved candidates if such a committee is not chosen when voting. 
Clearly, SMWPI is a mild variant of participation and \citet{SaFi19a} show that all ABC scoring rules satisfy this condition. 
Moreover, \citet[Prop. A.3]{LaSk22b}, \citet{MoOl15a}, and \citet{Jans16a} consider various sequential rules and prove that they fail SMWPI, which implies that they also fail participation. 
Notably, the proof of \Cref{thm:iterativeNoShow} also works with SMWPI and our result thus strengthens the existing results by showing that essentially all sequential rules fail this property. 

Our paper is also related to the study of strategyproofness and robustness in ABC voting \citep[e.g.,][]{AGG+15a,Pete18a,BFK+21a,FGK22b}. 
In particular, participation can be seen as a variant of strategyproofness that prohibits that voters manipulate by abstaining, or as a robustness axiom that measures how much impact an abstaining voter can have on the outcome. 
Many of these papers are conceptually similar to ours as they first study whether ABC voting rules satisfy an axiom and then explore escape routes.

Finally, in the broader realm of social choice, there are numerous papers that study participation. 
In his seminal paper, \citet{Moul88b} showed that a large class of single-winner voting rules known as Condorcet extensions fail participation. 
This result caused a large amount of follow-up work, which either strengthens the negative result \citep[e.g.,][]{JPG09a,Dudd14b,BGP16c} or explores escape routes \citep[e.g.,][]{BBH15c,BBGH18a}. 
A particularly noteworthy paper in our context is by \citet{PJG10a} who show that a large class of committee voting rules fail participation when voters report ranked ballots. 
We refer to \citet{Hofb19a} for a survey on participation in social choice.

\section{Preliminaries}

In this paper, we will use the standard ABC voting setting following the notation of \citet{LaSk22b}. To formalize this model, let $\mathbb{N}$ denote an infinite set of voters and let $C$ denote a set of $m>1$ candidates. An electorate $N$ is a non-empty and finite subset of $\mathbb{N}$ and we suppose that every voter $i\in N$ reports an approval ballot $A_i$ to express her preferences. Formally, an approval ballot is a non-empty subset of $C$. An \emph{approval profile $A$} is the collection of the approval ballots of all voters $i\in N$, i.e., a function of the type $N\rightarrow 2^C\setminus \{\emptyset\}$. We denote by $N_A$ the set of voters that report a ballot in profile $A$ and by $N_A(c)$ the set of voters who approve candidate $c$ in $A$. Moreover, $A_{-i}$ (resp. $A_{-I}$) is the profile derived from $A$ when voter $i\in N_A$ (resp. a group of voters $I\subseteq N_A$) abstains. More formally, $A'=A_{-i}$ is defined by $N_{A'}=N_A\setminus \{i\}$ and $A_j'=A_j$ for all $j\in N_{A'}$.

Given an approval profile, our goal is to elect a \emph{committee}, which is a subset of the candidates of predefined size. 
Following the literature, we define $k\in \{1,\dots,m-1\}$ as the target committee size and $\mathcal{W}_k=\{W\subseteq C\colon |W|=k\}$ as the set of size $k$ committees. 
We collect all information associated with an election in an \emph{election instance $E=(N,C,A,k)$}, where we omit $N$ and $C$ whenever they are clear from the context. 
Given an election instance $E$, our goal is to determine the winning committee. 
To this end, we will use \emph{approval-based committee (ABC) voting rules} which map every election instance $E$ to a non-empty subset of $\mathcal{W}_k$, i.e., ABC voting rules may return multiple committees that are tied for the win.

\subsection{Classes of Voting Rules}

We now introduce several (classes of) ABC voting rules. We assume that all rules return all committees that can be obtained by some tie-breaking order.

\paragraph{ABC scoring rules.}
ABC scoring rules, which were introduced by \citet{LaSk21a}, generalize scoring rules to ABC elections: each voter gives points to each committee and the winning committees are those with the maximal total score. Formally, these rules are defined by a scoring function $s$ which maps all $x,y\in \mathbb N_0$ with $x\leq y$ to a 
rational number $s(x,y)$ such that $s(x,y)\geq s(x',y)$ for all $x'\leq x\leq y$. 
Without loss of generality, we suppose that $s(0,y)=0$ for all $y$. Intuitively, $s(x,y)$ is the score a voter gives to a committee $W$ when she approves $x$ members of $W$ and $y$ in total. Thus, the total score of a committee $W$ in a profile $A$ is $\hat s (A,W) := \sum_{i\in N_A} s(| A_i \cap W |, |A_i|)$. 
The ABC scoring rule defined by the scoring function $s$ chooses the committees $W$ that maximize the total score $\hat s(A,W)$.

\paragraph{Thiele rules.} Thiele rules, suggested by \citet{Thie95a}, are scoring rules that are independent of the ballot size, i.e., $s(x,y)=s(x,y')$ for all $x\le y\le y'$.
Therefore, we drop the second argument of the scoring function.
We impose the standard requirements that $s(1)>0$ and $s(x+1)-s(x)\geq s(x+2)-s(x+1)$ for all $x\in\mathbb{N}_0$ (\emph{concavity}).
Important examples of Thiele rules are multiwinner approval voting (\texttt{AV}), defined by $s(x) = {x}$, proportional approval voting (\texttt{PAV}), defined by $s(x) = \sum_{y= 1}^x \frac{1}{y}$, and Chamberlin-Courant approval voting (\texttt{CCAV}), defined by $s(x) = 1$ for all $x>0$. 

\paragraph{Sequential query rules.}
Generalizing concepts of \citet{BDI+23a} and \citet{DoLe22a}, we introduce the class of \emph{sequential query rules}. The idea of this class is to encapsulate ABC voting rules that compute the winning committees by sequentially adding candidates. To formalize this, we let $\mathcal{S}(C)$ denote the set of all non-repeating sequences of candidates with length $\ell\leq m-2$. In particular, the empty set is the only sequence of length $0$. 
The central concept for sequential query rules are \emph{query functions}~$g$ which take a profile $A$, a target committee size $k$, and a sequence $S=(c_1,\dots,c_\ell)\in\mathcal S(C)$ as input and return a subset of $C\setminus \{c_1,\dots,c_\ell\}$. 
Intuitively, $g(A,k,S)$ are the candidates that will be chosen next given that the candidates in $S$ have been selected in this order. Moreover, we demand that $g(A,k,S)$ is non-empty whenever $S$ is generated by $g$. Formally, we say a sequence $S=(c_1,\dots, c_\ell)$ is \emph{valid} for $g(A,k,\cdot)$ if $S=\emptyset$ or $c_i\in g(A, k, (c_1,\dots, c_{i-1}) )$ for all $1\leq i\leq \ell$. We require that $g(A,k,S)\neq\emptyset$ whenever $S$ is valid and $\ell<k$.
Finally, an ABC voting rule $f$ is a \emph{sequential query rule} induced by the query function $g$ if $f(A,k)=\{ \{c_1,\dots, c_k\}\in \mathcal W_k \colon \text{ $(c_1,\dots, c_k)$ is valid for $g(A,k,\cdot)$} \}$ for all profiles $A$ and committee sizes $k$. The class of sequential query rules as defined here is actually equivalent to the set of ABC voting rules as there are no restrictions on $g$. Hence, we will later introduce axioms for sequential query rules to pinpoint when a sequential query rule fails participation. In the following, we introduce several voting rules that can be easily described as sequential query rules.

\paragraph{Sequential Thiele rules.}
Sequential Thiele rules are greedy versions of Thiele rules and have also been suggested by \citet{Thie95a}. 
Given some Thiele scoring function $s$, these rules extend in every step each committee $W$ of the previous step with the candidates $c$ that increase the score the most. 
More formally, sequential Thiele rules are sequential query rules defined by the query function $g(A,k,(c_1,\dots,c_\ell))=\text{argmax}_{x\in C\setminus \{c_1,\dots, c_\ell\}} \hat s(A, \{x,c_1,\dots,c_\ell\})$.
Prominent examples of sequential Thiele rules are {\sccav} and {\spav} which are the sequential versions of {\ccav} and \pav. Note that the sequential version of {\av} is identical to \av.

\paragraph{Sequential Phragmén.} This rule (\sphrag), which was suggested by \citet{Phra95a} and rediscovered by \citet{BFJL16a}, relies on a cost-sharing mechanism. In more detail, \sphrag assumes that each candidate has a cost of $1$ and each voter starts with a budget of $0$. Over time, the budget of each voter increases uniformly and as soon as the voters that approve some candidate $c$ have a total budget of $1$, they buy $c$ and add it to the winning committee. The budget of these (and only these) voters is then reset to $0$. The process continues until $k$ candidates have been bought. Clearly, {\sphrag} is a sequential query rule. 

\paragraph{Method of equal shares.}
The method of equal shares (\mes), which is due to \citet{PeSk20a}, works similar to \sphrag. In particular, every candidate again costs $1$, but every voter~$i$ starts with a budget of $x_0(i)=\frac{k}{n}$ instead of earning budget over time. \mes then tries to buy candidates in sequential steps.
In more detail, let $x_r(i)$ denote the budget of each voter $i$ after $r$ steps and let $X=\{c_1,\dots, c_r\}$ denote the set of candidates that have already been bought. 
We define by $C_r:=\{c\in C\setminus X\colon \sum_{i\in N_A(c)} x_r(i)\geq 1\}$ the set of candidates that can still be afforded. If $C_r\neq\emptyset$, we add the candidate $c\in C_r$ to the winning committee that incurs the minimal cost to the voter paying the most when splitting the cost as equally as possible, i.e., $c$ minimizes $\rho(c)$ with $\sum_{i\in N_A(c)} \min(\rho(c), x_r(i))=1$. 
Next, we set $x_{r+1}(i)=x_r(i) - \min(\rho(c), x_r(i))$ for $i\in N_A(c)$ 
and $x_{r+1}(i)=x_{r}(i)$, otherwise. 
We then continue with the next round. 
This process, typically called Phase~1 of \mes, iterates until $C_r=\emptyset$. 
If at this point less than $k$ candidates have been bought, Phase~2 of \mes starts where we have to complete the committee. For this, a variant of \sphrag is used where voters keep their remaining budget from Phase~1. 

\subsection{Participation}

We next turn to the central axiom of this paper, participation. The idea of this condition is that voters should not be worse off when voting instead of abstaining. 
To formalize this, we say that a voter $i$ (weakly) \emph{prefers} a committee $W$ to committee $W'$ (denoted by $W\succsim_i W'$) if $\lvert W\cap A_i\rvert \geq \lvert W'\cap A_i\rvert$, and \emph{strictly prefers} $W$ to $W'$ (denoted by $W\succ_i W'$) if $\lvert W\cap A_i\rvert > \lvert W'\cap A_i\rvert$. 
This approach is the standard to extend voters' preferences to preferences over committees \citep[see, e.g.,][]{AGG+15a,Bota21a,DDE+22a}. 
Since our ABC voting rules return sets of committees, we furthermore need to lift the voters' preferences to sets of committees. 
Following the literature \citep{KdV+20a,Bota21a}, we use \emph{Kelly's extension}. 
This extension states that a voter $i$ prefers a set of committees $X$ to another set of committees $Y$ (denoted by $X\succsim_i Y$) if $W\succsim_i W'$ for all $W\in X$ and $W'\in Y$ \citep{Kell77a}. 
Moreover, this preference is strict (denoted by $X\succ_i Y$) if there are $W\in X$, $W'\in Y$ with $W\succ_i W'$. Kelly's extension guarantees that $X\succsim_i Y$ if and only if voter $i$ weakly prefers the outcome chosen from $X$ to the outcome chosen from $Y$ regardless of the tie-breaking. 
We note, however, that all of our results except the complexity results in \Cref{subsec:hardness} are rather independent of the extension to sets of committees and, for instance, also hold under lexicographic tie-breaking or other extensions such as Fishburn's extension \citep{Fish72a}, Gärdenfors' extension \citep{Gard76a} or the leximax extension \citep{JPG09a}.

Now, an ABC voting rule $f$ satisfies \emph{participation} if $f(A_{-i},k)\not\succ_i f(A,k)$ for all profiles $A$, voters $i\in N_A$, and committee sizes $k\in \{1,\dots,m-1\}$. 
Put differently, participation ensures that voters can never benefit by abstaining. 
To further strengthen the axiom, we say that a group of voters $I\subsetneq N_A$ \emph{benefits from abstaining} for a profile $A$ and committee size $k$ if $f(A_{-I},k)\succsim_i f(A,k)$ for all $i\in I$ and $f(A_{-I},k)\succ_i f(A,k)$ for some $i\in I$. Then, an ABC voting rule $f$ satisfies \emph{group participation} if it is never possible for a group of voters to benefit by abstaining.

\section{Results}

We are now ready to formulate our results. 
In \Cref{subsec:part}, we will show that ABC scoring rules satisfy group participation and that most sequential ABC voting rules fail participation. In \Cref{subsec:axiomaticEscapeRoutes}, we thus explore two axiomatic escape routes to the impossibility for sequential rules. 
Finally, in \Cref{subsec:hardness}, we show for our considered sequential ABC voting rules that it is \NP-hard to decide for a voter whether she benefits by abstaining. Due to space restrictions, we defer all proofs not discussed in this section to the appendix.  

\subsection{Participation for ABC Voting Rules}\label{subsec:part}

The goal of this section is to understand which ABC voting rules satisfy participation. To this end, we first show that all ABC scoring rules even satisfy group participation. 

\begin{theorem}\label{thm:scoringPossibility}
    Every ABC scoring rule satisfies group participation.
\end{theorem}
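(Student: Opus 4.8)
The plan is to run a welfare-based argument analogous to the classical proof that positional scoring rules satisfy participation in single-winner elections. The central object is the \emph{group score} $\Phi(W):=\sum_{i\in I} s(|A_i\cap W|,|A_i|)$ that a prospective abstaining group $I$ contributes to an arbitrary committee $W$. Everything rests on the additive decomposition $\hat s(A,W)=\hat s(A_{-I},W)+\Phi(W)$, which holds for every $W$ simply because $\hat s$ is a sum over voters. I would then assume, toward a contradiction, that some $I\subsetneq N_A$ benefits from abstaining and derive a contradiction by comparing $\Phi$-values under the two optimality conditions.

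Concretely, from the definition of Kelly's extension I would fix witnesses $W\in f(A,k)$, $W'\in f(A_{-I},k)$ and a voter $i_0\in I$ with $|W'\cap A_{i_0}|>|W\cap A_{i_0}|$, while $|W'\cap A_i|\geq |W\cap A_i|$ for every $i\in I$ (the latter from $f(A_{-I},k)\succsim_i f(A,k)$, which is a statement about all pairs of winning committees and hence applies to this pair). Since $s$ is nondecreasing in its first argument, this gives $\Phi(W')\geq\Phi(W)$. On the other hand, the two optimality facts $\hat s(A,W)\geq \hat s(A,W')$ (as $W$ wins for $A$) and $\hat s(A_{-I},W')\geq \hat s(A_{-I},W)$ (as $W'$ wins for $A_{-I}$) combine through the decomposition to yield $\Phi(W)\geq\Phi(W')$. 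Hence $\Phi(W)=\Phi(W')$, and tracing back through the same inequalities forces $\hat s(A_{-I},W)=\hat s(A_{-I},W')$ as well as $\hat s(A,W)=\hat s(A,W')$.

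The step I expect to be the main obstacle is squeezing out a contradiction even though $s$ is only \emph{weakly} monotone: the strict gain $|W'\cap A_{i_0}|>|W\cap A_{i_0}|$ need not produce any strict score difference (the scoring function may be flat there), so the chain above collapses to equalities rather than an immediate contradiction. The resolution is to exploit precisely these equalities together with the convention that an ABC scoring rule returns \emph{all} committees of maximal score. Indeed, $\hat s(A_{-I},W)=\hat s(A_{-I},W')$ with $W'\in f(A_{-I},k)$ forces $W\in f(A_{-I},k)$, and $\hat s(A,W)=\hat s(A,W')$ with $W\in f(A,k)$ forces $W'\in f(A,k)$; thus both witnesses lie in $f(A,k)\cap f(A_{-I},k)$. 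Applying $f(A_{-I},k)\succsim_{i_0} f(A,k)$ to the pair $W\in f(A_{-I},k)$ and $W'\in f(A,k)$ then gives $|W\cap A_{i_0}|\geq |W'\cap A_{i_0}|$, contradicting the defining strict inequality for $i_0$. I would close by remarking that the argument uses nothing beyond monotonicity of $s$ and the tie-returning convention, so it applies uniformly to every ABC scoring rule, and in particular to all Thiele rules.
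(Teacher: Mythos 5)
Your proof is correct, and while it rests on the same two pillars as the paper's proof---the additive decomposition $\hat s(A,W)=\hat s(A_{-I},W)+\Phi(W)$ and the monotonicity of $s$ in its first argument---it is organized in a genuinely different way. The paper argues by a case distinction on the relationship between the two choice sets: if some $W\in f(A_{-I},k)\setminus f(A,k)$ exists, the strict inequality $\hat s(A,W')>\hat s(A,W)$ for $W'\in f(A,k)$ propagates through the decomposition to contradict $W$'s optimality in $A_{-I}$; otherwise $f(A_{-I},k)\subsetneq f(A,k)$, and equality of $A$-scores among co-winners forces a committee of $f(A,k)\setminus f(A_{-I},k)$ back into $f(A_{-I},k)$. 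Notably, the paper's contradiction in both cases targets committee \emph{membership} in the choice sets and never directly invokes the strict gain of the benefiting voter, whereas you take the strict Kelly pair $(W',W)$ as your witnesses, squeeze all four inequalities to equalities, conclude that both witnesses lie in $f(A,k)\cap f(A_{-I},k)$, and then contradict the strict preference of $i_0$ by applying Kelly weak dominance to the pair in the reversed roles. Your route buys uniformity: there is no case split, and the degenerate case $f(A_{-I},k)=f(A,k)$---which falls outside both of the paper's cases and is only implicitly excluded there by the impossibility of a strict Kelly pair between identical sets---is handled automatically. You also correctly identified that weak monotonicity of $s$ alone cannot yield a strict score gap and that the resolution must come from the convention that the rule returns \emph{all} score-maximizing committees; the paper's second case relies on exactly the same convention, so neither argument is more demanding in its hypotheses, and both apply verbatim to all ABC scoring rules and hence all Thiele rules.
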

\begin{proof}
    Let $f$ be an ABC scoring rule and let $s$ denote its scoring function. We assume for contradiction that there is a profile $A$, a committee size $k$, and a group of voters $I\subsetneq N_A$ that benefits from abstaining, i.e., $f(A_{-I},k)\succsim_i f(A,k)$ for all $i\in I$ and $f(A_{-I},k)\succ_{i^*} f(A,k)$ for some $i^*\in I$. Next, we proceed with a case distinction with respect to $f(A_{-I},k)$ and first consider the case that there is $W\in f(A_{-I},k)\setminus f(A,k)$. 
    By definition of Kelly's extension, this means that $|W\cap A_i|\geq |W'\cap A_i|$ for all $W'\in f(A,k)$ and $i\in I$. Using the definition of ABC scoring rules, it hence follows that $\sum_{i\in I} s(|A_i\cap W|, |A_i|)\geq \sum_{i\in I} s(|A_i\cap W'|, |A_i|)$. On the other hand, it holds that $\hat s(A,W')>\hat s(A,W)$ since $W'\in f(A,k)$ and $W\not\in f(A,k)$. Combining these facts then implies that $\hat s(A_{-I},W)=\hat s(A,W)-\sum_{i\in I} s(|A_i\cap W|, |A_i|)<\hat s(A,W')-\sum_{i\in I} s(|A_i\cap W'|, |A_i|)=\hat s(A_{-I}, W')$, which contradicts that $W\in f(A_{-I},k)$. 
    As the second case, we suppose that $f(A_{-I},k)\subsetneq f(A,k)$ and let $W\in f(A_{-I},k)$, $W'\in f(A,k)\setminus f(A_{-I},k)$. Using Kelly's extension, we infer again that $|W\cap A_i|\geq |W'\cap A_i|$ for all $i\in I$, so $\sum_{i\in I} s(|A_i\cap W|, |A_i|)\geq \sum_{i\in I} s(|A_i\cap W'|, |A_i|)$. 
    Moreover, $\hat s(A,W)=\hat s(A,W')$ because $W,W'\in f(A,k)$. 
    We conclude that $\hat s(A_{-I}, W)\leq \hat s(A_{-I},W')$, which contradicts our assumption because $W\in f(A_{-I},k)$ is then only possible if $W'\in f(A_{-I}, k)$. 
    Since we have a contradiction in both cases, $f$ satisfies group participation.
 \end{proof}

Next, we turn to sequential ABC voting rules. As discussed before, for some of these rules (e.g., \spav, \sphrag, and \mes), earlier results in the literature imply that they fail participation \citep[Prop. A.3]{Jans16a,LaSk22b}. We will next show that the incompatibility is much more far-reaching as essentially all sequential rules other than \av fail participation. In more detail, we introduce three mild axioms for query functions and prove that every sequential query rule whose query function satisfies these conditions fails participation. 

\paragraph{Continuity.} Continuity has been introduced by \citet{Youn75a} for single-winner voting rules and requires that large groups of voters can enforce that some of their desired outcomes are chosen. 
Formally, we say that a query function $g$ satisfies \emph{continuity} if for all $A, A',k$ and $S\in \mathcal S(C)$, it holds that $g(\lambda A + A',k,S)\subseteq g(A,k,S)$ for every sufficiently large $\lambda\in \mathbb N$. 
The sum of approval profiles means that a copy for each voter in each profile is present, and the multiplication by a non-negative integer $\lambda$ that $\lambda$ copies of each voter are present. 

\paragraph{Standardness.}
A condition that almost all commonly considered sequential ABC voting rules satisfy is that they typically choose the approval winner as first candidate. 
We formalize this as standardness: a query function $g$ is \emph{standard} if $g(A,k,\emptyset) = \av(A,1)$. 

\paragraph{Concurrence.} While the previous two axioms are necessary for \Cref{thm:iterativeNoShow}, they do not capture its essence 
as \av and other sequential rules satisfy continuity, standardness, and participation. 
The crucial observation is that sequential query rules only optimize the voter satisfaction myopically, which causes a dependence on the history of previous choices. 
For instance, consider the following profile with $4$ ballots, let $k=2$, and suppose that candidate $c$ is chosen first. 
\begin{equation*}
  1\times \{a,b\} \quad 1\times \{b,c\} \quad 1\times \{a\} \quad 1\times \{c\} 
\end{equation*}
Essentially all commonly considered sequential ABC voting rules but \av will choose $a$ and not $b$
as next candidate because the voter who approves $b$ and $c$ is already partially satisfied. 
More generally, let $S\in \mathcal S(C)$ be a sequence of already chosen candidates and consider $a,b\in {C}$ with $|N_A(a)|\geq |N_A(b)|$ such that in each step of $S$ the voters approving $b$ are at least as satisfied as the voters approving $a$.
Concurrence captures the idea that sequential rules prefer to choose $a$ over $b$ in such a situation.
To formalize this in a fashion that encompasses all commonly considered sequential rules, we add technical restrictions which weaken the axiom. In more detail, we call a query function $g$ \emph{concurring} if the following holds: 
Consider a profile $A$ such that $|A_i|\leq 2$ for all $i\in N_A$ and $|N_A(c)|=|N_A(d)|\leq \frac{n}{k}$ for all $c,d\in C$, and a sequence of already chosen candidates $(c_1,\dots, c_\ell)\in \mathcal{S}(C)$. 
Then, for all candidates $c,d\in C\setminus \{c_1,\dots,c_\ell\}$ such that $|\{i\in N_A\colon A_i=\{c_j,d\}\}|\geq |\{i\in N_A\colon A_i=\{c_j,c\}\}|$ for all $j\in \{1,\dots,\ell\}$, where one of these inequalities is strict, it holds that $d\not\in g(A,k,(c_1,\dots,c_\ell))$.
As we show next, myopic efficiency (in the form of concurrence) is the main culprit for the no-show paradox of sequential query rules.
\medskip

\begin{theorem}\label{thm:iterativeNoShow}
Every sequential query rule fails participation if $k\ge 3$ and its query function is standard, concurring, and continuous. 
Even more, a voter can obtain only $1$ approved candidate when participating while obtaining $k-1$ approved candidates when abstaining.
\end{theorem}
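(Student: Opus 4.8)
The plan is to exhibit a single profile $B$ together with a distinguished voter $i^\ast$ for whom abstaining is strictly beneficial under Kelly's extension, where the committee guaranteed when $i^\ast$ participates contains exactly one candidate she approves, while abstaining makes a committee containing $k-1$ of her approved candidates admissible. I would separate the three axioms into three distinct roles: standardness pins down the first chosen candidate, concurrence forces the rest of the sequence, and continuity guarantees that the device used to control the first candidate does not disturb the later choices. Concretely, I would take $B=\lambda A + A'$ for a large $\lambda\in\mathbb N$, where $A$ is a symmetric ``core'' chosen to satisfy the hypotheses of concurrence (all ballots of size at most $2$, all candidates with equal approval score, that score at most $n/k$, the latter arranged by padding $A$ with equally many singleton ballots per candidate, which affect no co-approval count) and $A'$ is a small perturbation containing $i^\ast$. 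The candidate set is partitioned into two distinguished candidates $x,y$, a ``bad'' block $b_1,\dots,b_{k-1}$, and a ``good'' block $g_1,\dots,g_{k-1}$, and $i^\ast$ reports $A_{i^\ast}=\{x,g_1,\dots,g_{k-1}\}$.

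The key idea that makes a single voter pivotal is to have $i^\ast$ \emph{create a tie} at the first step rather than overturn a strict maximum: removing one voter can never turn a strict approval winner into a different strict winner (her departure changes each score by at most one, so at the crossover point two scores must coincide), but it can collapse a strict winner into a tie. Since all core scores are equal, the first-step scores of $B$ are governed by $A'$ alone, and I would design $A'$ so that in $A'_{-i^\ast}$ the candidates $x$ and $y$ are tied strictly above all others, whereas adding $i^\ast$ (who approves $x$ but not $y$) makes $x$ the unique approval winner. By standardness this gives $g(B,k,\emptyset)=\{x\}$ but $g(B_{-i^\ast},k,\emptyset)=\{x,y\}$, so abstaining opens $y$ as an admissible first candidate.

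For the remaining steps I would invoke continuity: for every sequence $S$ and all sufficiently large $\lambda$ (uniformly over the finitely many relevant sequences) we have $g(B,k,S)\subseteq g(A,k,S)$, and likewise for $B_{-i^\ast}$, so the perturbation $A'$ and hence $i^\ast$ become irrelevant after the first step. It then remains to engineer the co-approval structure of $A$ so that concurrence forces, from each admissible start, a fully determined chain: from $x$ the chain $x\to b_1\to\cdots\to b_{k-1}$, and from $y$ the chain $y\to g_1\to\cdots\to g_{k-1}$. The cleanest way to guarantee a unique next candidate is to arrange that at each step exactly one not-yet-chosen candidate has zero co-approval with the \emph{entire} current history: such a candidate cannot be excluded by concurrence, while every other remaining candidate, having positive co-approval with some history member, is excluded with that zero-co-approval candidate as witness. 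This yields $f(B,k)=\{W_{\mathrm{bad}}\}$ with $W_{\mathrm{bad}}=\{x,b_1,\dots,b_{k-1}\}$ and $f(B_{-i^\ast},k)=\{W_{\mathrm{bad}},W_{\mathrm{good}}\}$ with $W_{\mathrm{good}}=\{y,g_1,\dots,g_{k-1}\}$. Since $|W_{\mathrm{bad}}\cap A_{i^\ast}|=1$ and $|W_{\mathrm{good}}\cap A_{i^\ast}|=k-1$, every committee in $f(B_{-i^\ast},k)$ is weakly preferred by $i^\ast$ to the unique committee in $f(B,k)$, and $W_{\mathrm{good}}\succ_{i^\ast}W_{\mathrm{bad}}$ is strict precisely because $k-1>1$, i.e.\ because $k\ge 3$; hence $f(B_{-i^\ast},k)\succ_{i^\ast}f(B,k)$ and participation fails with the claimed gap.

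The main obstacle I expect is the combinatorial design of the core $A$: I must realize, with size-$2$ ballots of prescribed multiplicities, equal candidate scores, and scores at most $n/k$, a co-approval multigraph in which the two prescribed orders $x,b_1,\dots,b_{k-1}$ and $y,g_1,\dots,g_{k-1}$ are each the \emph{unique} sequence produced by the ``zero co-approval with the history'' rule. In particular, at the first step of the $x$-chain every candidate except $b_1$ must already be co-approved with $x$, which forces a rather dense near-regular structure, and the multiplicities must be balanced so that the analogous exclusions hold at all $k$ levels of both chains simultaneously while keeping every candidate's score equal. Verifying step-by-step that concurrence excludes every unintended candidate (and that nothing excludes the wanted one) is the technical heart; the scoring bookkeeping and the final Kelly argument are then routine.
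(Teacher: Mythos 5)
Your high-level architecture is exactly the paper's: a large symmetric core $A$ plus a small perturbation $A'$ containing the distinguished voter, standardness pinning the first candidate via a tie that this voter creates or breaks, continuity transferring all later steps of $\lambda A+A'$ to the core, concurrence forcing the continuations, and Kelly's extension closing the argument. (You invert where the tie sits---the paper gets the tie $\{W_{\mathrm{bad}},W_{\mathrm{good}}\}$ under participation and the singleton $\{W_{\mathrm{good}}\}$ under abstention, while you get the singleton $\{W_{\mathrm{bad}}\}$ under participation and the tie under abstention; both yield a strict Kelly violation and both witness the $1$ versus $k-1$ gap up to tie-breaking.) However, your concrete forcing mechanism is internally contradictory. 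You demand that at history $(x)$ exactly one candidate, $b_1$, has zero co-approval with $x$---so, as you say yourself, every candidate except $b_1$ must already be co-approved with $x$, including $b_2,\dots,b_{k-1}$---and simultaneously that at history $(x,b_1)$ the candidate $b_2$ has zero co-approval with the \emph{entire} history, in particular with $x$. These requirements cannot both hold: since histories are nested, ``zero co-approval with the entire history'' at step $\ell$ forces $b_\ell$ to have zero co-approval already at every earlier history, so the zero-co-approval candidate can never be unique beyond the first step, and at history $(x,b_1)$ no candidate at all satisfies your selection criterion. The chain argument therefore fails at step~2, and no profile realizes the construction as specified.

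The repair is to give up uniqueness of the sequence, which the theorem does not need (only the final committee matters): make $\{x,b_1,\dots,b_{k-1}\}$ an \emph{independent set} in the co-approval graph, let $x$ share one ballot with each of $y,g_1,\dots,g_{k-1}$ (symmetrically for $\{y,g_1,\dots,g_{k-1}\}$), and pad with singletons to equalize scores. Then along any history inside the $x$-block, every cross-block candidate is excluded by concurrence using any remaining $b_j$ as an all-zero witness, while no $b_j$ can ever be excluded (the required strict inequality would need a positive co-approval count on $b_j$'s side); hence $g(A,k,\cdot)$ is contained in the remaining block and every valid completion yields the same committee $\{x,b_1,\dots,b_{k-1}\}$, in whatever order. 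Note that this reverses your ``dense near-regular structure'' intuition: within each target committee you need sparsity, not density. The paper itself takes yet another route---a near-complete co-approval graph with graded multiplicities $1,2,3$, so that the concurrence exclusions compare positive counts---and it likewise never forces a unique order: it only establishes $g(A,k,(a_1,b_1,\dots,b_{\ell-1}))\subseteq\{b_\ell,\dots,b_{r-1}\}$ and appeals to the symmetry of those candidates.
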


\begin{proof}
Let $f$ denote a sequential query rule and suppose its query function $g$ is standard, concurring, and continuous. Moreover, we let $k\ge 3$ denote the target committee size and set $C=\{a_1,\dots, a_r, b_1,\dots, b_r\}$ for $r=k-1$. Next, consider the following profile $A$: first, we add for every two element subset $B\subseteq C$ except for $\{a_1,b_1\}$ and $\{a_r,b_r\}$ a voter who approves $B$; 
second, we add for each $i\in \{2,\dots, r\}$ two voters who approve $\{a_i,b_1\}$ and for each $j\in \{2,\dots,r-1\}$ two voters who approve $\{b_j,a_r\}$; third, we add a voter who approves $\{b_1,b_r\}$ and a voter who approves $\{a_1,a_r\}$; finally, we add voters who approve only a single candidate such that all candidates have the same approval score. Moreover, if $k=3$, we add another candidate $d$ before the last step that shares three ballots $\{b,x\}$ with each $x\in C$ to ensure that $|N_A(x)|\leq \frac{n}{k}$ for all $x\in C$. This candidate will be ignored from now on as it does not affect our analysis. \Cref{fig:AlmostCompleteGraph} visualizes this profile by depicting ballots of size $2$ as edges.
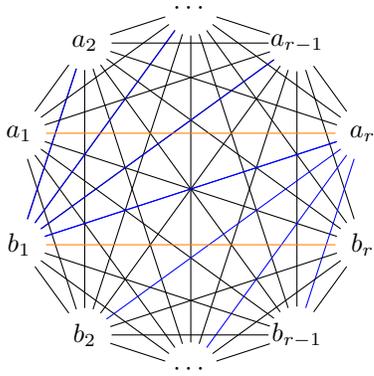
\begin{figure}
    \centering
    \begin{tikzpicture}[scale=0.80, every node/.style={scale=1}]

\def\radius{3} 
\def\startangle{162} 
\def\incrementangle{-36} 

\foreach \i/\text [count=\n from 0] in {1/$a_1$, 2/$a_2$, 3/$\cdots$, 4/$a_{r-1}$, 5/$a_r$} {
    \node[circle, inner sep = 0, minimum size = 2em] (a\i) at (\startangle+\n*\incrementangle:\radius) {\text};
}

\foreach \i/\text [count=\n from 0] in {1/$b_1$, 2/$b_2$, 3/$\cdots$, 4/$b_{r-1}$, 5/$b_r$} {
    \node[circle, inner sep = 0, minimum size = 2em] (b\i) at (\startangle+36 -\n*\incrementangle:\radius) {\text};
}

\foreach \i in {1,...,5} {
    \foreach \j in {1,...,5} {
        \ifnum\i=1
            \ifnum\j=1
            \else
                \draw (a\i) -- (b\j);
            \fi
        \else
            \ifnum\i=5
                \ifnum\j=5
                \else
                    \draw[blue] (a\i) -- (b\j); 
                \fi
            \else
                \draw (a\i) -- (b\j);
            \fi
        \fi
    }
}

\foreach \i in {2,...,5} {
    \draw[blue] (a\i) -- (b1); 
}

\foreach \i in {1,...,5} {
    \foreach \j in {\i,...,5} {
        \ifnum\i=\j
        \else
            \ifnum\i=1
                \ifnum\j=5 
                    \draw[orange] (a\i) -- (a\j); 
                \else \draw (a\i) -- (a\j); 
                \fi
                    
            \else    
                    \draw (a\i) -- (a\j); 
            \fi
        \fi
    }
}

\foreach \i in {1,...,5} {
    \foreach \j in {\i,...,5} {
        \ifnum\i=\j
        \else
            \ifnum\i=1
                \ifnum\j=5 
                    \draw[orange] (b\i) -- (b\j); 
                \else \draw (b\i) -- (b\j); 
                \fi
                    
            \else    
                    \draw (b\i) -- (b\j); 
            \fi
        \fi
    }
}
\end{tikzpicture}
    \caption{Visualization of the profile $A$ for the proof of \Cref{thm:iterativeNoShow}.
    A black, orange, or blue edge between two alternatives means that there is (are) exactly one, two, or three voter(s) who approve(s) the connected candidates, respectively.}
    \label{fig:AlmostCompleteGraph}
\end{figure}

We next determine the winning committees for $A$ and first note that $g(A,k,\emptyset)=C$ by standardness. Hence, $(a_1)$ is a valid sequence. 
Our first goal is to show that all ways of extending $(a_1)$ lead to the committee $\{a_1,b_1,\dots, b_r\}$.
First, concurrence implies that $g(A,(a_1))= \{b_1\}$, 
as all other candidates share some ballot with $a_1$.
Hence, the only valid continuation is $(a_1,b_1)$. 
We next suppose inductively that $(a_1,b_1,\dots, b_{\ell-1})$ is a valid sequence with $\ell<r$. 
We now check the requirements for concurrence and consider to this end the candidate $b_\ell$. First, we note that there is one voter who reports $\{b_\ell, x\}$ for each alternative $x\in \{a_1,b_1,\dots, b_{\ell-1}\}$. 
By contrast, for each candidate $a_i$ with $i\geq 2$, there is at least one voter who reports $\{a_i,x\}$ for each $x\in \{a_1, b_2,\dots, b_{\ell-1}\}$ and three voters who report $\{b_1,a_i\}$. 
So, concurrence implies that $a_i\not\in g(A,k,(a_1,b_1,\dots,b_{\ell-1}))$. 
Similarly, we can check that $b_r$ is not in this set because two voters report $\{b_1,b_r\}$ but only a single voter reports $\{b_1,b_\ell\}$. 
Hence, $g(A,k,(a_1,b_1,\dots,b_{\ell-1}))\subseteq \{b_\ell,\dots, b_{r-1}\}$ and due to the symmetry of these candidates in $A$, we can suppose that $b_\ell\in g(A,k,(a_1,b_1,\dots,b_{\ell-1}))$. 
For the final step, we need to compare $b_r$ with $a_i$ with $i>1$. Since $b_1$ only shares two ballots with $b_r$ and three with each candidate $a_i$ for $i\geq 2$, and each other candidate $x\in \{a_1,b_2,\dots,b_{r-1}\}$ shares one ballot with $b_r$ and at least one ballot with $a_i$, concurrence necessitates $g(A,(a_1,b_1,\dots, b_{r-1}))= \{b_r\}$. 
Finally, we conclude that $\{a_1,b_1,\dots, b_{k}\}$ is the only chosen committee of size $k$ when electing $a_1$ in the first round. 
Moreover, an analogous argument shows that $g$ can only extend the sequence $(b_r)$ to the committee $\{b_r,a_1,\dots, a_r\}$. 

As the next step, we let $A'$ denote a profile which consists of four voters who report $\{a_1,\dots, a_r\}$, $\{b_1,\dots, b_r\}$, $\{a_1\}$, and $\{b_r\}$ respectively. Furthermore, let $A^* = \lambda A + A'$ for some $\lambda\in \mathbb{N}$. By standardness, we obtain that $g(A^*,\emptyset) = \{a_1, b_r\}$ regardless of the choice of $\lambda$. Next, by continuity, we can choose $\lambda$ large enough such that $g(A^*,s) \subseteq g(A,s)$ for all sequences $s\in\mathcal S(C)$. By combining this with our previous analysis, it is now easy to infer that $f(A^*,k)=\{\{a_1,b_1,\dots, b_r\}, \{b_r, a_1,\dots, a_r\}\}$. 

Finally, to show that $f$ fails participation, we consider the profile $A^*_{-i}$ where the voter $i$ with ballot $\{a_1,\dots, a_r\}$ abstains. By standardness, it follows that $g(A^*_{-i},\emptyset) = \{b_r\}$. Furthermore, for large enough $\lambda$, we get again that $g(\lambda A+A'_{-i},k,s)\subseteq g(A,k,s)$ for all $s\in \mathcal S(C)$. This means that $f(A^*_{-i},k)=\{\{b_r, a_r,\dots, a_1\} \}$, so voter $i$ can benefit by abstaining and $f$ fails participation.
\end{proof}

As a corollary of \Cref{thm:iterativeNoShow}, it follows immediately that \sphrag, \mes, and all sequential Thiele rules but \av fail participation because 
the query functions of these rules satisfy all conditions of \Cref{thm:iterativeNoShow}.

\begin{corollary}\label{cor:}
    Every sequential Thiele rule except \av, as well as {\sphrag} and 
    {\mes} fail participation. 
\end{corollary}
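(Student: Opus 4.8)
The plan is to derive the corollary directly from \Cref{thm:iterativeNoShow}: it suffices to check that the query functions of (i) every sequential Thiele rule other than \av, (ii) \sphrag, and (iii) \mes are standard, concurring, and continuous. Since \Cref{thm:iterativeNoShow} exhibits a no-show paradox already for $k\ge 3$, and participation quantifies over all committee sizes, verifying these three axioms for each rule immediately yields that it fails participation.

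Standardness and continuity are the routine parts. For a sequential Thiele rule with scoring function $s$, the marginal score of adding a single candidate $c$ to the empty committee is $|N_A(c)|\,s(1)$, and since $s(1)>0$ this is maximized exactly by the approval winners, giving $g(A,k,\emptyset)=\av(A,1)$. Continuity follows because the marginal Thiele score is additive across voters, so in $\lambda A+A'$ the next-step score equals $\lambda$ times the $A$-score plus a bounded term; for large $\lambda$ the maximizers lie among those of $A$. For \sphrag the approvers of a candidate with the most supporters are the first to accumulate a total budget of $1$ (at time $1/|N_A(c)|$), which again is the approval winner, and the load dynamics scale so that continuity holds; \mes behaves analogously, picking the approval winner first (via Phase~1 when it is affordable and otherwise via the Phase~2 Phragm\'en completion), and the payment and load quantities are again continuous in the electorate.

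The main work is concurrence. Fix a profile with ballots of size at most $2$ and equal approval scores, a valid prefix $(c_1,\dots,c_\ell)$, and candidates $c,d$ with $d$ weakly more co-approved than $c$ with each already-chosen $c_j$ and strictly so for at least one $j$. Writing $p(x)$ for the number of approvers of $x$ that also approve some $c_j\in\{c_1,\dots,c_\ell\}$ and $q(x)=|N_A(x)|-p(x)$ for the remaining approvers, the equal-score assumption gives $p(d)>p(c)$ and $q(d)<q(c)$. For a Thiele rule the marginal gain of adding $x$ equals $q(x)\,s(1)+p(x)\,(s(2)-s(1))$, so the gain of $d$ minus that of $c$ equals $(p(d)-p(c))\,(s(2)-2s(1))$, which is nonpositive by concavity and \emph{strictly} negative whenever $s(2)-s(1)<s(1)$; this strict first-step concavity is exactly the property that separates every non-\av Thiele rule from \av, and it forces $d\notin g(A,k,(c_1,\dots,c_\ell))$. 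For \sphrag and \mes the same domination is reflected in the budgets: an approver of $d$ that also approves an already-bought $c_j$ had its budget reset (respectively was charged) when $c_j$ was purchased, so the supporters of $d$ carry strictly less collective budget than those of $c$; hence $d$ can only be afforded strictly later (respectively at strictly higher maximal payment $\rho$) than $c$, so $d$ is not selected next.

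The principal obstacle is the concurrence verification, and it splits along two lines. For Thiele rules the delicate point is isolating the single inequality $s(2)<2s(1)$ as the exact dividing line between \av and all other rules, and confirming that only the values $s(0),s(1),s(2)$ enter because all ballots have size at most $2$. For \sphrag and \mes the difficulty is bookkeeping the budget resets and payments across the prefix $(c_1,\dots,c_\ell)$: one must argue carefully that greater co-approval with the already-bought candidates translates monotonically into strictly less available budget for the supporters of $d$, and hence into $d$ being purchased strictly after $c$, so that $d$ indeed lies outside the query set.
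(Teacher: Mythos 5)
There is a genuine gap in your concurrence verification for sequential Thiele rules. Your computation of the marginal-gain difference $(p(d)-p(c))\bigl(s(2)-2s(1)\bigr)$ is correct, but your claim that $s(2)<2s(1)$ ``is exactly the property that separates every non-\av Thiele rule from \av'' is false. Concavity only forces $s(2)-s(1)\le s(1)$; a scoring function such as $s(1)=1$, $s(2)=2$, $s(3)=2.5,\dots$ is concave and induces a sequential Thiele rule distinct from \av, yet it satisfies $s(2)=2s(1)$ and therefore behaves exactly like \av on the size-$\le 2$ ballots that concurrence quantifies over. For such rules your difference is zero, concurrence fails, and \Cref{thm:iterativeNoShow} is simply not applicable---so your argument only establishes the corollary for the strict subclass with $s(2)<2s(1)$, while the statement covers all non-\av Thiele rules.

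The paper closes this case explicitly: if $s(2)-s(1)=s(1)$, then since $s$ is not \av there is a minimal $x>0$ with $s(2+x)-s(1+x)<s(1+x)-s(x)$, and one modifies the profile from the proof of \Cref{thm:iterativeNoShow} by adding $x$ fresh candidates $d_1,\dots,d_x$ to \emph{every} ballot. The rule must select these first, after which the effective increments are shifted by $x$ and the election proceeds as in the strictly concave case. Note that this is an adaptation of the theorem's proof, not a black-box application of the theorem, since the query function of such a rule genuinely violates concurrence. Separately, and more minorly: for \mes the paper does not argue about budgets directly as you do; it shows that on the profiles relevant to concurrence (ballots of size at most $2$, equal approval scores at most $n/k$) \mes coincides with \sphrag after Phase~1, which sidesteps the bookkeeping of payments and the affordability threshold that your sketch leaves unverified. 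Your \sphrag budget argument itself matches the paper's.
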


\paragraph{Remark 1.} The axioms of \Cref{thm:iterativeNoShow} are independent. All Thiele rules but \av satisfy all properties but standardness and \av only violates concurrence.
    Moreover, if we adapt \av to break ties whenever concurrence requires it, the resulting rule only fails continuity. 
    By contrast, \Cref{thm:iterativeNoShow} turns into a possibility if $k\leq 2$ or if requiring that a voter needs to approve all of the elected candidates after abstaining as sequential Thiele rules then satisfy participation.

\paragraph{Remark 2.}
    We note that the proof of \Cref{thm:iterativeNoShow} can be adapted to show that also reverse sequential ABC voting rules, which start with a full committee and then iteratively delete candidates, and sequential Thiele rules with increasing marginal contribution fail participation. 
    While such rules are only rarely considered in the literature, this shows that \Cref{thm:iterativeNoShow} is rather robust. 
    Moreover, Example~7 by \citet{SaFi19a} entails that also the optimizing variants of \sphrag fail participation.
    
    By contrast, there are sequential rules other than \av that satisfy participation. In particular, \citet{DoLe23a} introduce the class of ballot size weighted approval voting (BSWAV) rules.
    Since these rules are ABC scoring rules, they satisfy group participation by \Cref{thm:scoringPossibility}. However, all these rules coincide with their sequential version, thus giving a class of sequential ABC voting rules that satisfy participation. Notably, the proof of \Cref{thm:iterativeNoShow} can be modified to show that every sequential ABC scoring rule that satisfies participation and $s(x,y)>0$ for all $x,y>0$ belongs to this class. 

\subsection{Axiomatic Escape Routes to \Cref{thm:iterativeNoShow}}\label{subsec:axiomaticEscapeRoutes}

We next consider two escape routes to \Cref{thm:iterativeNoShow}: we first show that at least voters who do not approve any of the elected candidates cannot benefit by abstaining for most sequential ABC voting rules, and then that these rules satisfy participation on the important special case of laminar profiles.

\paragraph{Unrepresented Voters.}

The proof of \Cref{thm:iterativeNoShow} shows that voters who only approve a single elected candidate can significantly gain by abstaining and it is easy to extend this result to voters who approve more than one elected candidate. Hence, the only open case is whether voters who approve none of the elected candidates can benefit by abstaining. In analogy to the notion of strategyproofness for unrepresented voters by \citet{DDE+22a}, we thus require that a voter who approves none of the elected candidates cannot benefit by abstaining. More formally, we say an ABC voting rule $f$ satisfies \emph{participation for unrepresented voters} if $f(A_{-i},k)\not\succ_i f(A,k)$ for all approval profiles $A$, committee sizes $k$, and voters $i\in N_A$ for which there is a committee $W\in f(A,k)$ with $W\cap A_i=\emptyset$. 
As we show next, all sequential Thiele rules and \sphrag satisfy this condition, whereas \mes even fails this minimal notion of participation. 
We note that these results complement \Cref{thm:iterativeNoShow} by showing that the violation of participation observed in this result is maximal and, moreover, strengthen insights by \citet[][Prop.~A.3]{LaSk22b} on a mild monotonicity axiom.

\begin{restatable}{proposition}{unrep}\label{prop:unrep}
        Sequential Thiele rules and {\sphrag} satisfy participation for unrepresented voters. \mes violates participation for unrepresented voters. 
\end{restatable}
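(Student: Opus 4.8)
The plan is to treat the two positive claims (sequential Thiele rules and \sphrag) by a common argument and the negative claim (\mes) by an explicit counterexample. For the positive claims I would fix an unrepresented voter $i$, i.e.\ a profile $A$, committee size $k$, and a winning committee $W^*\in f(A,k)$ with $W^*\cap A_i=\emptyset$, and assume for contradiction that $f(A_{-i},k)\succ_i f(A,k)$. Unravelling Kelly's extension, this supposition splits into two targets. First, if I can show that $W^*$ is still winning after $i$ abstains, i.e.\ $W^*\in f(A_{-i},k)$, then the weak part of Kelly's extension forces $|W\cap A_i|\le|W^*\cap A_i|=0$ for every $W\in f(A,k)$; that is, \emph{every} winning committee of $A$ avoids $A_i$. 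Second, the strict part of Kelly's extension then requires some $W'\in f(A_{-i},k)$ with $|W'\cap A_i|\ge 1$. Hence it suffices to prove: \textbf{(I)} $W^*$ survives the abstention of $i$, and \textbf{(II)} once every winning committee of $A$ avoids $A_i$, so does every winning committee of $A_{-i}$. Claims (I) and (II) together contradict the strict preference, establishing participation for unrepresented voters.

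For sequential Thiele rules both claims follow from one monotonicity fact about marginal contributions. Writing $\Delta_A(x\mid T)=\hat s(A,T\cup\{x\})-\hat s(A,T)=\sum_{j\in N_A(x)}\bigl(s(|A_j\cap T|+1)-s(|A_j\cap T|)\bigr)$ for the score gained by appending $x$ to the partial committee $T$, removing $i$ leaves $\Delta(x\mid T)$ unchanged whenever $x\notin A_i$, and whenever $x\in A_i$ and $T\cap A_i=\emptyset$ it decreases it by exactly $s(1)-s(0)=s(1)>0$. For claim (I) I would take a greedy sequence generating $W^*$; since every prefix $T\subseteq W^*$ is disjoint from $A_i$, each chosen candidate $c\in W^*$ keeps its $\Delta$ while every competitor's $\Delta$ can only drop, so $c$ remains a maximiser in $A_{-i}$ and the whole sequence stays valid. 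For claim (II) I would induct on the valid sequences of $A_{-i}$, maintaining that each is also valid for $A$ and stays disjoint from $A_i$: at such a state a maximiser in $A$ avoids $A_i$ by hypothesis and keeps its $\Delta$ when $i$ leaves, while every $A_i$-candidate's $\Delta$ strictly drops by $s(1)$, so no $A_i$-candidate can become a maximiser in $A_{-i}$; and any maximiser in $A_{-i}$ (necessarily outside $A_i$) retains its $\Delta$ and is therefore a maximiser in $A$ as well. Notably this uses only $s(1)>0$, not concavity.

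For \sphrag I would run the identical argument with purchase times in place of marginal scores: voter $i$ contributes budget only towards candidates in $A_i$, so along any trajectory that never buys an $A_i$-candidate the purchase times of candidates outside $A_i$ are unaffected by $i$, whereas every $A_i$-candidate is bought weakly later after $i$ abstains (or never). Since a candidate that was never the earliest-affordable in $A$ remains non-earliest in $A_{-i}$, claims (I) and (II) go through exactly as above.

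Finally, for \mes I would construct an explicit counterexample exploiting the one feature distinguishing it from \sphrag: each voter's initial budget $\frac{k}{n}$ rises to $\frac{k}{n-1}$ when $i$ abstains, so the remaining voters gain purchasing power. The target is a profile in which, with $i$ present, the budget is just tight enough that no candidate in $A_i$ is ever the cost-minimising affordable choice (so $i$ is unrepresented), while after $i$ abstains the co-approvers of some $x\in A_i$ become rich enough to buy $x$, making $i$ strictly better off. I expect this construction to be the fiddliest part: one must choreograph the purchase order in both runs and make the winning committees unique (or otherwise Kelly-clean) so that the set comparison genuinely yields $\mes(A_{-i},k)\succ_i \mes(A,k)$ while keeping $i$ unrepresented in $A$. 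On the positive side, the conceptual obstacle is claim (II) — ruling out that abstention \emph{creates} a new winning committee representing $i$ — since claim (I) alone only guarantees that some unrepresented committee survives, and does not by itself preclude newly represented committees.
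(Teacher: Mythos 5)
Your handling of the two positive claims is correct and is essentially the paper's own proof: the paper reduces the statement to exactly your claims (I) and (II) (the latter stated there in contrapositive form: if some winning committee of $A_{-i}$ meets $A_i$, then some winning committee of $A$ does too), proves them for sequential Thiele rules via precisely your marginal-score monotonicity --- removing $i$ leaves the marginal score of $x\notin A_i$ unchanged and lowers that of $x\in A_i$ as long as the elected prefix avoids $A_i$ --- and dismisses \sphrag as ``analogous.'' Your purchase-time translation is the intended analogue and is sound, the key point being that in \sphrag the per-voter earning rate, unlike the \mes endowment, does not depend on $n$. Your unravelling of Kelly's extension into the weak and strict parts also matches the paper's use of it.

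The genuine gap is the \mes claim. You correctly identify the lever (the endowment $k/n$ rising to $k/(n-1)$ upon abstention), but a violation of participation for unrepresented voters is an existential statement, and you supply no instance; as written, the proposal establishes only two of the proposition's three assertions. The paper's counterexample also shows the mechanism is subtler than your sketch of ``co-approvers become rich enough to buy $x\in A_i$'': with $51$ voters, $k=5$, and ballots $20\times\{x_1,x_2,x_3\}$, $10\times\{y_1,y_2\}$, $10\times\{y_1,y_2,z\}$, $9\times\{z,c\}$, $2\times\{c\}$, the abstention of one $\{c\}$-voter raises budgets from $\frac{5}{51}$ to $\frac{1}{10}$, which lets the large groups buy \emph{two} $x$-candidates and \emph{both} $y$-candidates at per-voter price $\frac{1}{20}$; this exhausts the $\{y_1,y_2,z\}$-voters, so $z$ becomes unaffordable, the $\{z,c\}$-voters keep their budgets, and $c$ is bought in Phase~1. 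With the voter present, only one $x$- and one $y$-candidate are affordable, $z$ is then bought (draining the $\{z,c\}$-voters), Phase~1 stalls, and Phase~2 completes the committee with further $x$- and $y$-candidates, so no winning committee contains $c$. Thus the benefit arises \emph{indirectly} --- abstention changes which candidates are bought in Phase~1 and thereby preserves the co-approvers' budgets --- and verifying uniqueness of the outcome in both runs, through Phase~2 included, is exactly the bookkeeping you flagged as fiddly but did not carry out; without it the negative claim remains unproven.
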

\begin{proof}[Proof sketch]
    The key insight why \sphrag and sequential Thiele rules satisfy participation for unrepresented voters is that an abstaining voter who does not approve any of the elected candidates cannot affect the picking sequence of the candidates. 
    Indeed, the scores of her approved candidates are always too low to be picked and the voter only further reduces these scores by abstaining. 
    By contrast, for \mes, we construct a counterexample by using that an abstaining voter influences the budgets of other voters.
\end{proof}

\paragraph{Laminar Profiles.}
As our second escape route to \Cref{thm:iterativeNoShow}, we consider the effect of restricting the domain of feasible profiles. In particular, we will show that sequential Thiele rules, \sphrag, and \mes satisfy participation on \emph{laminar profiles}. To this end, we say that a profile $A$ is laminar if for all candidates $c,d\in C$, it holds that $N_A(c)\subseteq N_A(d)$, $N_A(d)\subseteq N_A(c)$, or $N_A(c)\cap N_A(d)=\emptyset$. 
These profiles have been introduced by \citet{PeSk20a} who additionally require size constraints on the sets $N_A(c)$ which make no sense in our context. 
Laminar profiles generalize the concept of party-list profiles \citep[e.g.,][]{BLS18a,Bota21a} and thus constitute an important special case of approval profiles. 

\begin{restatable}{proposition}{laminar}\label{prop:laminar}
    Sequential Thiele rules, \sphrag, and \mes satisfy participation on laminar profiles.
\end{restatable}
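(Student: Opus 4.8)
The plan is to prove the stronger pointwise statement that $|A_i\cap W|\le |A_i\cap W'|$ for every winning committee $W\in f(A_{-i},k)$ and every $W'\in f(A,k)$; this says exactly $f(A,k)\succsim_i f(A_{-i},k)$, which under Kelly's extension immediately rules out $f(A_{-i},k)\succ_i f(A,k)$ and hence gives participation. Two preliminary observations set up the argument. First, laminar profiles are closed under voter deletion: removing $i$ from every set $N_A(c)$ preserves both nestedness and disjointness, so $A_{-i}$ is again laminar. Second, the candidates approved by a single voter $i$ have pairwise nested supporter sets: since each $N_A(c)$ with $c\in A_i$ contains $i$, no two of them can be disjoint, so they form a chain $N_A(c^1)\supseteq\dots\supseteq N_A(c^t)$, where $c^1,\dots,c^t$ are exactly the candidates $i$ approves. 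Intuitively $i$ is represented by a prefix of this chain, and abstaining can only shorten that prefix. A useful sub-claim along the way is that $|A_i\cap W|$ is in fact invariant across all winning committees $W$ of a fixed laminar profile, which makes the pointwise bound insensitive to tie-breaking.

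Structurally, I would organize the distinct supporter sets $\{N_A(c)\colon c\in C\}$ into a forest ordered by inclusion and argue by induction over this forest. The guiding claim is that on laminar profiles each of these rules allocates the $k$ seats to the nodes of the forest in a manner that depends only on the node sizes and the committee size/budget available to them, so that $i$'s number of approved winners is a monotone non-decreasing function of the sizes of the nodes on $i$'s branch. Since deleting $i$ decreases exactly the sizes of the sets containing $i$ (each by one) while leaving all incomparable and disjoint parts of the forest untouched, the induction reduces participation to a local monotonicity statement at each node.

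Establishing this local statement calls for a rule-specific argument with a common theme: abstaining weakly lowers the priority of every candidate $i$ approves. For \sphrag I would couple the two executions and show that each candidate in $A_i$ is bought no earlier in $A_{-i}$ than in $A$, since removing $i$ strictly slows the rate at which its supporter groups accumulate budget, while the groups supporting candidates outside $A_i$ are unaffected or also shrink. For sequential Thiele rules I would use concavity: along the chain $c^1,\dots,c^t$ the marginal score of each $c^j$ given the current committee only drops when $i$ is removed, so no $c^j$ can be selected earlier at the expense of a candidate $i$ disapproves. For \mes I would track the budget vectors $x_r$; the laminar structure keeps the pooling of budget within each node clean, so an abstaining voter only reduces the budget available to buy candidates on its branch, and Phase~2 reduces to the \sphrag argument.

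The main obstacle is that these are claims about the \emph{entire} picking order rather than a single step: I must rule out that abstention reshuffles the sequence so as to let a candidate $i$ approves slip in earlier while $i$ nonetheless ends up with at least as many approved winners overall. I expect to handle this with a careful coupling/exchange argument and an induction on the number of already selected candidates, keeping the two executions in lockstep on the parts of the forest not containing $i$ and showing that, on $i$'s branch, abstention can only delay or cancel purchases of $i$-approved candidates. The interaction of ties with this coupling, and—for \mes—the redistribution of budget together with the Phase~1/Phase~2 transition, are the most delicate points and are where the bulk of the technical work will lie.
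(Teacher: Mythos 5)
Your high-level setup does match the paper's (representing the laminar profile as a forest ordered by inclusion, noting that the candidates in $A_i$ have pairwise nested supporter sets and hence form a chain, and arguing that abstention only lowers the priority of candidates in $A_i$ while leaving disjoint subtrees untouched), but there are two genuine problems. First, your "useful sub-claim" that $|A_i\cap W|$ is invariant across all winning committees of a fixed laminar profile is false: take $C=\{a,b\}$, $k=1$, one voter with ballot $\{a\}$ and one with ballot $\{b\}$. This profile is laminar (disjoint supports), and every rule in the statement returns both $\{a\}$ and $\{b\}$ as tied winners, yet the first voter approves one candidate in the first committee and zero in the second. Relatedly, your target statement---$|A_i\cap W|\le|A_i\cap W'|$ for \emph{every} pair $W\in f(A_{-i},k)$, $W'\in f(A,k)$---is a strengthening that the paper does not prove and that Kelly-participation does not require. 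The paper instead establishes two weaker matching statements: for every $W\in f(A_{-i},k)$ there exists $W'\in f(A,k)$ with $|A_i\cap W'|\ge|A_i\cap W|$, and for every $W'\in f(A,k)$ there exists $W\in f(A_{-i},k)$ with $|A_i\cap W|\le|A_i\cap W'|$; together these already rule out $f(A_{-i},k)\succ_i f(A,k)$ under Kelly's extension. By aiming at the stronger pointwise bound and leaning on the false invariance lemma, your plan targets a claim you cannot establish by the route you sketch.

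Second, and more fundamentally, the proposal defers exactly the step where the proof's substance lies. You acknowledge that abstention might "reshuffle" the picking order and say a "careful coupling/exchange argument" will handle it, but this coupling is the entire technical core. The paper's sequence-comparison induction makes the invariant precise: every valid sequence $\vec c$ for $A_{-i}$ is matched to a valid sequence $\vec d$ for $A$ in which the $A_i$-candidates of $\vec c$ appear in the same relative order, new $A_i$-candidates may be inserted only after those, and the remaining candidates appear in order with none skipped; the inductive step needs the facts that one may without loss of generality pick clones lying in $A_i$ first, and that the next candidate $c_j\notin A_i$ to be appended is not a successor of any inserted $A_i$-candidate, so its marginal score (resp.\ \sphrag{} load) is unchanged. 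Moreover, two further cases you do not address require separate treatment in the paper: Thiele rules with only weakly increasing scoring functions (where marginal contributions hit zero and arbitrary extensions become valid), and \mes{}, whose Phase~2 is \emph{not} plain \sphrag{} because voters carry heterogeneous leftover budgets---the paper needs the auxiliary lemma that all voters represented by the clones and successors of an unbought candidate hold equal budgets after Phase~1, plus an argument that distinct trees never compete in Phase~1 and that other trees only lose time in Phase~2 when $i$ participates. As written, the proposal is a correct-in-spirit plan whose hard steps are all still open.
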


\begin{proof}[Proof Sketch]
The central idea for this proposition is that for laminar profiles $A$, the sets $N_A(c)$ can be represented by a forest $F$ on the candidates where $c$ is a child of $d$ if $N_A(c)\subseteq N_A(d)$. 
For all considered rules, it then follows that it is always valid to add a candidate to the winning committee before any of her children.
This structure on the picking order can be used to show that a voter cannot increase the number of her approved
candidates by abstaining as the picking order imposed by the profile does not change.
\end{proof}

\subsection{Hardness of Abstention}
\label{subsec:hardness}
In this section, we investigate the following decision problem: given an election instance and a voter, can this voter benefit by abstaining from the election when a specific rule (e.g., \mes or \spav) is used.
This offers yet another perspective on \Cref{thm:iterativeNoShow}:
while many sequential rules fail participation, it can still be the case that the voters---even when knowing the preferences of all other participants---are unlikely to be able to efficiently decide whether they can benefit from abstention. 

The general proof idea for our reductions is as follows: the reduced instances consist of a part of the election mimicking an \NP-hard problem together with a gadget that is a small election where it is beneficial to abstain for a voter.
The elections then consist of essentially two stages. 
First, we select candidates associated with a proposed solution to the \NP-hard source problem, e.g., a set of vertices of a given target size.
Then, certain gadget candidates are selected.
If the proposed solution to the \NP-hard problem is not of the desired form, e.g., if the vertex set is not an independent set, then the selection of gadget candidates cannot be influenced by abstention. However, if the source instance is a Yes-instance, then some voter approving gadget candidates can benefit from abstention.

As a first result in this section, we find that participation leads to a computational intractability for sequential Thiele rules.
The proof is inspired by \citet{JaFa23b}.
We showcase the proof and the general proof technique of this section by considering the special case of \spav.

\begin{restatable}{theorem}{SeqThieleNP}\label{thm:SeqThieleNP}
    For every sequential Thiele rule except \av, it is \NP-hard to decide whether a voter can benefit from abstention.
\end{restatable}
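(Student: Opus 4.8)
The plan is to give a polynomial-time reduction from a known $\NP$-hard problem whose structure naturally produces a ``two-stage'' election: a first stage that forces the rule to commit to candidates encoding a candidate solution, and a second stage containing a participation gadget that can only be triggered when the first-stage choice is of the desired combinatorial form. Given that the authors cite inspiration from \citet{JaFa23b} and frame the general technique around selecting a vertex set of a target size, I would reduce from \textsc{Independent Set} (or a closely related problem such as \textsc{Vertex Cover} or \textsc{Clique}). The source instance is a graph $G=(V,\mathcal E)$ together with a target $t$, and the question is whether $G$ has an independent set of size $t$.

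First I would construct, for a fixed sequential Thiele rule with scoring function $s$, an election instance $E=(N,C,A,k)$ together with a designated voter $i^\ast$. The candidate set would contain one ``vertex candidate'' for each $v\in V$, plus a small collection of gadget candidates analogous to the $a_j,b_j$ construction in the proof of \Cref{thm:iterativeNoShow}. The ballots would be engineered so that, in the first $t$ rounds, the sequential rule is forced to pick exactly $t$ vertex candidates, and the score increments are calibrated (exploiting the strict concavity inequality $s(x+1)-s(x)\ge s(x+2)-s(x+1)$ and $s(1)>0$ that every non-\av{} Thiele rule satisfies) so that picking two adjacent vertices is penalized relative to picking an independent set. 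The key design constraint is that the gadget, which is a scaled copy of a small profile where abstention is beneficial for $i^\ast$, becomes ``active'' in the later rounds if and only if the committee selected in the first stage corresponds to an independent set of size $t$. I would set $k$ to be $t$ plus the number of gadget candidates that must be filled.

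The two implications then need to be verified. For the forward direction, if $G$ has an independent set of size $t$, the first stage selects the corresponding vertex candidates with no adjacency penalty, leaving the scores balanced so that the gadget behaves exactly as in \Cref{thm:iterativeNoShow}; hence $i^\ast$ strictly benefits from abstaining. For the reverse direction, if every size-$t$ vertex selection contains an edge, the adjacency penalty shifts the myopic scores enough that the gadget candidates are chosen in a fixed order regardless of whether $i^\ast$ participates, so abstention cannot help her. The argument here mirrors the inductive ``concurrence-style'' bookkeeping in the proof of \Cref{thm:iterativeNoShow}: at each round I would compare the marginal score gains of the relevant candidates and show the intended candidate strictly dominates.

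I expect the main obstacle to be making the construction \emph{uniform across all non-\av{} Thiele scoring functions $s$ simultaneously}, since the reduction must work without knowing $s$ beyond concavity, $s(1)>0$, and the fact that $s$ is not linear (the defining feature separating a general Thiele rule from \av). The difficulty is that different scoring functions assign different marginal values, so the gadget's score margins and the adjacency penalty must be expressed through multiplicities (numbers of duplicated voters and ballots) that can be tuned using only the qualitative concavity gap $s(x)-2s(x+1)+s(x+2)$, which may be zero for some arguments even when $s\neq\av$. Handling the possibility that $s$ is ``almost linear'' except at large $x$---and ensuring the relevant strict inequality is actually available at the ballot sizes the gadget uses---is the delicate point; I would resolve it by first locating an integer $x_0$ where the strict concavity gap is positive (which must exist for any non-\av{} concave $s$) and scaling the ballot sizes in the gadget so that all decisive comparisons occur precisely at $x_0$, thereby guaranteeing a uniform strict margin.
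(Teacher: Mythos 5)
Your overall architecture matches the paper's proof quite closely: the paper also reduces from \textsc{IndependentSet} (on \emph{cubic} graphs), also uses vertex candidates selected in a first stage followed by gadget candidates $g_1,\dots,g_4$, and also realizes the benefit of abstention through a tie that adds a second winning committee exactly when the source instance is a Yes-instance (strictness then comes from Kelly's extension). However, there is a genuine gap in the step you yourself flag as the delicate one. You propose to make the reduction uniform over all non-\av{} Thiele rules by locating an integer $x_0$ where the \emph{second} difference of $s$ is strictly positive, i.e.\ where $\delta(x_0+1)>\delta(x_0+2)$ for $\delta(x)=s(x)-s(x-1)$, and centering all decisive comparisons at $x_0$. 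This is the wrong inequality. In any edge-penalty design of the kind you describe, a voter with ballot $\{c_v,c_w,g_1\}$ contributes $\delta(2)$ to $g_1$'s marginal score when exactly one endpoint is elected and $\delta(3)$ when both are; with $e_1+2e_2$ fixed (which is why the paper needs cubic graphs — your proposal also omits this regularity requirement, without which the $g_1$-score varies even among independent sets), the total gap between a non-independent and an independent selection works out to $e_2\bigl[\delta(1)-2\delta(2)+\delta(3)\bigr]$, a \emph{third} difference of $s$. Strict concavity at $x_0$ does not make this quantity positive there: take $\delta(1)=3$, $\delta(2)=2$, $\delta(3)=1$, $\delta(4)=\delta(5)=\dots=1$. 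This is concave, non-\av, and strictly concave at $x_0=1$, yet $\delta(1)-2\delta(2)+\delta(3)=0$, so at your chosen comparison point Yes- and No-instances produce identical scores for $g_1$ and the reduction cannot distinguish them.

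The paper's proof supplies exactly the missing lemma: for every concave non-\av{} scoring function there exists $Y$ with $\delta(Y-1)-\delta(Y)>\delta(Y)-\delta(Y+1)$ (a strictly positive third difference of $s$), proved by observing that if the decrements of $\delta$ never strictly decreased past the first strict drop, $\delta$ would fall by a fixed positive amount each step and eventually become negative, contradicting monotonicity of $s$. One then normalizes to $Y=2$ by the padding trick you gesture at (adding $Y-2$ universally approved candidates, which are forced to be selected first). So your scaling idea is the right instinct, but it must be anchored at a point of strictly positive third difference of $s$, not merely a point of strict concavity, and the existence of such a point requires this separate argument rather than following directly from $s\neq\av$.
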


\begin{proof}[Proof sketch for \spav]
	We perform a reduction from \textsc{IndependentSet} for cubic
    graphs \citep{GaJo79a}. Given an instance $(G,t)$ of \textsc{IndependentSet}, where $G = (V,E)$ is a cubic graph and $t\in\mathbb{N}$ is the target size of the independent set, we construct the following reduced instance.
	Without loss of generality, we assume that we only consider instances where $|V| \ge 3$, $|E|>0$, $|V|^3$ is divisible by~$8$, and $t$ is divisible by~$2$.
    Since the independent set needs to be of size $t$ and the graph is cubic, we assume $|E|\ge 3t$.

	The set of candidates is $C = \{g_i\colon i\in \{1,\dots,4\} \}\cup C_V$, where
    $C_V = \{c_v\colon v\in V\}$.
	The candidates $g_i$, called \emph{gadget candidates}, form a gadget in which abstention might be performed and candidates $c_v$, called \emph{vertex candidates}, represent vertices $v$ of the source instance.

Let $x = |V|$ and $y = \frac 12\left(x^4 -t\frac {x^3}2 +\frac {x^3}4\right)$. This is an integer by assumption.
The approval profile is given as follows:
	\begin{itemize}
		\item For each vertex $v\in V$, there exist $x^3$ voters approving $\{c_v\}$. For each pair of vertices $\{v,w\}\subseteq V$, there exist $x^3$ voters with approval set $\{c_v,c_w\}$. For each edge $\{v,w\}\in E$, there exists one voter with approval set $\{c_v,c_w,g_1\}$.
		\item Moreover, there are voters approving only the gadget candidates. These are  $y$ voters for each of the approval sets $\{g_1,g_2\},\{g_1,g_3\},\{g_2,g_4\},\{g_3,g_4\}$, $|E|-\frac 32t\in \mathbb N_{>0}$ voters approving $\{g_2\}$ 
        (recall that $t$ is divisible by $2$, $|E|>0$, and $|E| \ge 3t$), and  one voter approving $\{g_1\}$.
	\end{itemize}

    As usual, $A$ denotes the approval profile.
	The target committee size is $k=x +3$. 
	Eventually, we will select all candidates in $C_V$ as well as $3$ gadget candidates.
	We claim that 
		a voter with approval set $\{g_1,g_3\}$ can benefit from abstention if and only if the source instance is a Yes-instance.

    We qualitatively describe the election by {\spav} in the reduced instance.
    Initially, the score of vertex candidates is larger than the score of gadget candidates.
    By design of the scores, the first $t$ candidates to be selected are vertex candidates.
    Now, the marginal gain to the score by gadget candidates overtakes the gain by vertex candidates, and we select two gadget candidates.
    The reduced instance is designed in a way such that without abstention, $g_1$ always has the highest score among gadget candidates and is selected first. Then, the candidate $g_4$ is selected.
    Afterwards, the committee is filled with the remaining vertex candidates and then a third gadget candidate.
    Since $g_2$ contributes more than $g_3$, we select $g_2$ as the final candidate in the committee.
    Together, the choice set contains exactly the committee $C_V\cup\{g_1,g_2,g_4\}$.

        If, however, a voter with approval set $\{g_1,g_3\}$ abstains from the election, the election is similar but the committee $C_V\cup\{g_1,g_2,g_3\}$ may be selected additionally if the source instance was a Yes-instance.
		
		In summary, if the source instance is a No-instance, then the choice set is identical after abstention, and there is no incentive to abstain.
		Otherwise, the choice set additionally contains $C_V\cup\{g_1,g_2,g_3\}$ and is preferred by a voter with approval set $\{g_1,g_3\}$ (due to Kelly's extension).
\end{proof}

As a next result, we want to consider the method of equal shares.
Importantly, an execution of {\mes} may heavily rely on the completion method applied in Phase~2.
We thus provide a reduction where a voter benefits from abstention both after Phase~1 and Phase~2 of {\mes}.
The same reduction also yields a result for {\sphrag}.

\begin{restatable}{theorem}{hardMES}\label{thm:hardMES}
    Consider voting by {\mes}.
    Then, 
    \begin{enumerate}
        \item it is \NP-hard to decide whether a voter can benefit from abstention after Phase~1.
        \item it is \NP-hard to decide whether a voter can benefit from abstention after Phase~2, 
        even if none of her approved candidates are elected in Phase~2.
    \end{enumerate}
    
    Moreover, for {\sphrag}, it is \NP-hard to decide whether a voter can benefit from abstention.
\end{restatable}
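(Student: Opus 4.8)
The plan is to adapt the two-stage template of this section to the budget dynamics of \mes and \sphrag, reducing again from \textsc{IndependentSet} on cubic graphs. Given an instance $(G,t)$ with $G=(V,E)$, I would introduce a vertex candidate $c_v$ for every $v\in V$ and a constant-size gadget $g_1,g_2,g_3,g_4$, and choose the committee size so that all vertex candidates and exactly three gadget candidates are elected. The electorate splits into a \emph{graph part} and a \emph{gadget part}: the graph part contains many voters approving the singletons $\{c_v\}$ and the pairs $\{c_v,c_w\}$, plus one \emph{edge voter} approving $\{c_v,c_w,g_1\}$ for each $\{v,w\}\in E$; the gadget part contains groups approving $\{g_1,g_2\}$, $\{g_1,g_3\}$, $\{g_2,g_4\}$, $\{g_3,g_4\}$ together with suitable singleton voters. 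The distinguished voter who may abstain approves $\{g_1,g_3\}$, mirroring the \spav gadget. The magnitudes (here the analogue of the $x^3$ and $y$ counts) are tuned so that in Phase~1 the vertex candidates, carrying the heavy pair-voter budget, are the cheapest affordable candidates and are bought first, and only once $t$ of them are fixed does the cheapest option shift to the gadget.

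The independent-set constraint is encoded through the edge voters. An edge voter $\{c_v,c_w,g_1\}$ keeps its budget toward $g_1$ as long as at most one of $c_v,c_w$ has been bought and is (partially) drained once both endpoints are bought. Hence, if the first $t$ vertex candidates selected form an independent set, every edge voter still carries full budget toward $g_1$, whereas a non-independent choice drains some edge voters and lowers the budget available to the gadget. I would calibrate the gadget counts so that this single-voter difference in available budget is exactly the margin that decides whether, in the final gadget step, $g_3$ can be afforded as cheaply as $g_2$. Because the vertex candidates are symmetric, some tie-breaking order buys an arbitrary size-$t$ vertex set first, so the favorable margin is achievable precisely when $G$ admits an independent set of size $t$. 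When the voter approving $\{g_1,g_3\}$ participates, her extra budget tips this race strictly toward $g_2$, so \mes returns only the committee $C_V\cup\{g_1,g_2,g_4\}$; when she abstains, the race becomes tied under exactly those orders that first fix an independent set of size $t$, so $C_V\cup\{g_1,g_2,g_3\}$ is additionally returned for Yes-instances. By Kelly's extension she then strictly prefers the abstention outcome, which yields the reduction.

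To obtain all three claims from one construction, I would locate the decisive $g_2$-versus-$g_3$ comparison differently relative to the Phase~1/Phase~2 boundary. For the Phase~1 claim I arrange that $g_1$ and the critical third gadget candidate already become affordable within Phase~1, so that the divergence between participation and abstention is visible among the Phase~1 committees. For the Phase~2 claim I shift the budgets so that Phase~1 terminates with $g_1$ bought but with the $g_2/g_3$ decision postponed to the \sphrag-style completion; since $g_1$ is secured in Phase~1, the voter gains no approved candidate during Phase~2 itself, establishing the strengthened statement. Finally, because Phase~2 of \mes is exactly a budget-carrying variant of \sphrag, running the same gadget with all initial budgets set to zero and time-based accumulation transfers the argument verbatim to \sphrag.

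The main obstacle I anticipate is the continuous budget bookkeeping. Unlike the integer marginal-score comparisons of the Thiele case, \mes and \sphrag compare the fractional thresholds $\rho(c)$ and the real-valued times at which approver budgets reach $1$, so I must choose the voter counts so that simultaneously (i) vertex candidates are strictly cheaper than gadget candidates for the first $t$ rounds, (ii) the gap created by a single drained edge voter is large enough to flip the $g_2/g_3$ decision but small enough to be closed exactly by the abstaining voter, and (iii) all numbers remain polynomially bounded. Making these thresholds line up at once for the Phase~1, Phase~2, and \sphrag executions---so that a single instance certifies every part of the theorem---will be the delicate part of the argument.
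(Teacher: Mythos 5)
Your high-level template (gadget plus \NP-hard part, a single voter's budget as the decisive margin, one construction serving Phase~1, Phase~2, and \sphrag) matches the spirit of the paper, which however reduces from \textsc{RX3C} rather than \textsc{IndependentSet}. But your sketch has two genuine gaps that are specific to \mes and would make the construction fail as described. First, the draining mechanism you import from the \spav proof does not transfer: you assert that an edge voter with ballot $\{c_v,c_w,g_1\}$ ``keeps its budget toward $g_1$ as long as at most one of $c_v,c_w$ has been bought and is (partially) drained once both endpoints are bought.'' Under \mes this is false --- when the \emph{first} approved candidate $c_v$ is bought, every voter in $N_A(c_v)$ pays $\min(\rho(c_v),x_r(i))$, and in any calibration where vertex candidates are affordable this payment consumes essentially the voter's entire budget. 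The Thiele argument could distinguish ``one endpoint'' from ``both endpoints'' via the second difference $(\delta(1)-\delta(2))-(\delta(2)-\delta(3))$; \mes has no analogue, so your margin must instead count edge voters drained by the \emph{first} endpoint, i.e., edges incident to $S$ ($3t-e(S)$ for a size-$t$ set in a cubic graph, maximized exactly by independent sets). This is repairable and is in fact how the paper's reduction works (element voters approving $\{c_S\colon S\in\mathcal S_i\}\cup\{g_1\}$ are drained by the first covering set, so $g_1$'s count of full-budget supporters encodes whether the selected cover is exact), but it is a different mechanism from the one you describe, and it also explains why the paper needs a six-candidate cascading gadget (the $g_1$-versus-$g_4$ choice propagates through the drained shared ballots to $\{g_5,g_6\}$ versus $\{g_2,g_3\}$) rather than your four-candidate gadget with the decision at the last seat, where all leftover budgets are negligible.

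Second, your plan for part~2 of the theorem contradicts its strengthened clause. You propose to ``postpone the $g_2/g_3$ decision to the \sphrag-style completion,'' but the abstaining voter approves $g_3$, so in the Yes-case one of \emph{her approved} candidates would be elected in Phase~2 --- exactly what the statement excludes (``even if none of her approved candidates are elected in Phase~2''). The paper resolves this by settling the entire gadget, including all of the abstainer's approved candidates $g_1,g_2,g_3$, within Phase~1, and introducing two fresh filler candidates $b_1,b_2$ (approved by large disjoint voter blocks, not by her) that are the only candidates bought in Phase~2; the same instance then also certifies the Phase~1 claim. Beyond these two points, your proposal explicitly defers the entire threshold calibration, which is the substance of the proof here: the paper's construction makes every purchase cost each supporter essentially the full per-voter budget $k/n$, so that affordability reduces to counting full-budget supporters and a one-voter difference in that count is exactly the decisive margin --- and it verifies separately that the identical instance works for \sphrag by recomputing the purchase times, rather than by a verbatim transfer.
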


The proofs of our hardness results are quite universal and allow for a number of interesting consequences. 
First, if we omit the abstaining voter from the reduced instance, then there is exactly one possibly selected committee if the source instance is a No-instance, and two possible committees if the source instance is a Yes-instance. 
Hence, we recover the hardness of deciding whether the election contains more than one possible committee \citep{JaFa23b}.

Second, it is interesting to see why we formulate our theorems as \NP-hardness, but not \NP-completeness, i.e., we do not know whether membership in {\NP} holds.
In fact, it is unclear whether this actually is true. 
In particular, we cannot use the outcomes of elections as polynomial-size certificates for verifying whether a voter benefits from abstention because we cannot check this in polynomial time.

\begin{restatable}{corollary}{Verification}\label{cor:verification}
For every sequential Thiele rule except {\av}, as well as for Phase~1 or complete {\mes}, and {\sphrag}, the following statements are true.
    \begin{enumerate}
        \item Given a set of committees $\mathcal C$, it is \coNP-complete to decide whether $\mathcal C$ is the outcome of the election.
        \item Given a positive integer $s$, it is \NP-complete to decide whether a given voter approves at least $s$ candidates in some winning committee.
    \end{enumerate}
\end{restatable}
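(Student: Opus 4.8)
The plan is to recycle the two reductions behind \Cref{thm:SeqThieleNP} and \Cref{thm:hardMES}, deleting the distinguished abstaining voter from each reduced instance. As the discussion preceding the corollary records, this deletion leaves an instance whose outcome is a single committee $W_1$ (for \spav, $W_1=C_V\cup\{g_1,g_2,g_4\}$) when the source \textsc{IndependentSet} instance is a No-instance, and two committees $\{W_1,W_2\}$ (with $W_2=C_V\cup\{g_1,g_2,g_3\}$) when it is a Yes-instance. First I would confirm, for each rule named in the statement, that this one-versus-two dichotomy is exactly what the run-throughs in the proofs of \Cref{thm:SeqThieleNP} and \Cref{thm:hardMES} yield once the abstaining voter---who was the sole cause of the second committee---is removed.

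For the second item I would argue \NP-completeness directly. Membership is easy: a valid picking sequence $(c_1,\dots,c_k)$ is a polynomial-size certificate whose validity is checkable in polynomial time for each rule (every step is an \texttt{argmax} evaluation for sequential Thiele, or a first-purchase/affordability test for \sphrag and \mes), after which one checks $|A_i\cap\{c_1,\dots,c_k\}|\ge s$ for the queried voter $i$. For hardness I would take the deleted instance and pick a voter $i$ whose ballot meets $W_2\setminus W_1$ but not $W_1\setminus W_2$ (e.g.\ one of the many remaining voters with ballot $\{g_1,g_3\}$ in the \spav instance), setting $s=|A_i\cap W_1|+1$. Then $i$ approves at least $s$ candidates in some winning committee if and only if $W_2$ is winning, i.e.\ if and only if the source is a Yes-instance, which is a valid reduction from \textsc{IndependentSet}.

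For the first item, the same deleted instance with $\mathcal C=\{W_1\}$ gives \coNP-hardness: since $W_1$ is winning in both cases, $\mathcal C$ is the entire outcome precisely when no second committee is winning, i.e.\ precisely when the source is a No-instance. As recognizing No-instances of \textsc{IndependentSet} is \coNP-complete, this reduces a \coNP-complete problem to our question.

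I expect the membership argument for the first item to be the main obstacle. Placing ``$\mathcal C$ is the outcome'' in \coNP amounts to placing its complement in \NP. One failure mode, namely $f(A,k)\not\subseteq\mathcal C$, is cleanly certified by a valid picking sequence reaching a committee outside $\mathcal C$. The delicate mode is the reverse inclusion $\mathcal C\subseteq f(A,k)$: certifying that some $W\in\mathcal C$ is \emph{not} winning is a priori a universal statement, so a naive analysis only places the whole problem in $\mathrm{DP}=\{L_1\cap L_2 : L_1\in\NP,\,L_2\in\coNP\}$. To pin it at \coNP I would show that, for each rule considered, reachability of a single committee is decidable in polynomial time, so that $\mathcal C\subseteq f(A,k)$ is verified outright and only the first mode must be guessed. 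Supplying this polynomial-time single-committee reachability test for sequential Thiele rules, \sphrag, and \mes---rather than the reductions, which are immediate from the earlier theorems---is where the real work lies.
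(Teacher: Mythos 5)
Your hardness arguments coincide with the paper's own (two-sentence) proof: the paper likewise deletes the abstaining voter and, for item~1, asks whether the always-selected committee $W_1$ is the unique winner; for item~2 the paper instead keeps the abstainer and removes the critical gadget approval from some other gadget voter, whereas you delete the abstainer and query a surviving voter with ballot $\{g_1,g_3\}$ (resp.\ $\{g_2\}$ in the \mes/\sphrag reduction) --- these are equivalent, and your choice of $s=|A_i\cap W_1|+1$ gives the correct equivalence. Your \NP-membership argument for item~2 via a valid picking sequence as certificate is also exactly what the paper takes for granted.

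The genuine gap is in your plan for \coNP-membership of item~1. You stake it on a polynomial-time test for whether a \emph{single} given committee is reachable by some valid sequence, but you never supply this test, and it is very doubtful that it exists: for sequential Thiele rules the order in which \emph{tied} members of the target committee are picked matters. Concretely, for \spav with $k=2$, ballots $2\times\{a,b\}$, $2\times\{b\}$, $2\times\{a,x\}$, $2\times\{x\}$ and target $W=\{a,b\}$, all three candidates are tied initially, but after picking $b$ the marginal score of $x$ strictly exceeds that of $a$, so only the order $(a,b)$ reaches $W$; a greedy that resolves the initial tie arbitrarily fails, and no exchange lemma rescues it. So reachability is an order-sensitive search over chains of prefix sets, and closely related tie questions for these very rules are \NP-hard (cf.\ \citealp{JaFa23b} and item~2 itself); your ``real work'' is thus not a deferred routine verification but an open (and likely false) claim, leaving your membership argument at the $\mathrm{DP}$ upper bound you yourself derive.

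What you miss is the cheaper repair that makes the paper's claim go through on exactly the instances its reduction produces. Since $f(A,k)\neq\emptyset$ for every instance, for a \emph{singleton} input $\mathcal C=\{W\}$ one has $f(A,k)\neq\{W\}$ if and only if some valid sequence yields a committee different from $W$: if $W\notin f(A,k)$ then nonemptiness supplies such a sequence, and the problematic mode $f(A,k)\subsetneq\mathcal C$ cannot occur. Hence the complement is a pure \NP\ predicate and \coNP-membership is immediate in the unique-winner case, with no reachability oracle needed; this is the reading implicit in the paper's proof (``question whether the committee that is always selected is the unique winner''). For $|\mathcal C|\ge 2$ your concern about the mode $f(A,k)\subsetneq\mathcal C$ is legitimate --- the paper is silent on it --- but resolving it via a polynomial-time reachability test is the wrong tool, and as written your proposal does not establish the \coNP\ upper bound it promises.
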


Finally, our reductions give novel insights into the robustness of sequential ABC voting rules.
\citet{FGK22b} consider the question whether the outcome of an election can change if a given number of approvals of candidates can be added or removed. They find that this problem is \NP-hard for {\sccav}, {\spav}, and {\sphrag}.
However, they operate in a setting, where the election of a single committee is enforced by lexicographic tie-breaking and they need an unbounded budget of approvals to be added or deleted (but of linear magnitude with respect to the size of the source instance).
As a third corollary from our reductions, we complement their results in the set-valued setting and obtain hardness even if we only add or delete a single approval. The result holds because for a tied second committee, only the approval of $g_1$ by the abstaining voter is relevant. So, if the source instance is a Yes-instance, then this approval can be added or deleted to create or prevent a second outcome of the election. The reductions can be modified so that the addition or deletion of other candidates does not matter for its outcome.

\begin{corollary}
    For every sequential Thiele rule except {\av}, as well as for Phase~1 or complete {\mes}, and {\sphrag}, it is \NP-complete to decide if the outcome of the election can change if a single approval is added or deleted. 
\end{corollary}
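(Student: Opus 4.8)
The plan is to reuse the reductions built for \Cref{thm:SeqThieleNP,thm:hardMES} almost verbatim and to argue that the reduced instance admits a single approval whose addition or deletion changes the set of winning committees if and only if the source \textsc{IndependentSet} instance is a Yes-instance; membership in \NP is then argued separately. Since \textsc{IndependentSet} on cubic graphs is \NP-hard, this would establish the claim uniformly for every sequential Thiele rule except \av, for \sphrag, and for Phase~1 as well as complete \mes.

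First I would isolate the single pivotal approval. In each reduction the two candidate outcome committees differ only in the last gadget seat (for \spav, the contest between $g_2$ and $g_3$), and the distinguished voter's approval of $g_1$ is exactly what decides this contest: once $g_1$ is chosen this voter is already partially represented, so deleting her approval of $g_1$ raises the marginal contribution (respectively the remaining budget) that $g_3$ can draw from her, which is precisely the effect a full abstention had in \Cref{thm:SeqThieleNP}. By the analysis of that theorem the second committee becomes tied exactly when the independent set exists. Hence, in the Yes-case, changing this one approval changes the outcome, while re-inserting it restores the singleton outcome, giving the forward implication; the same single approval of $g_1$ plays this role in the \mes and \sphrag reductions of \Cref{thm:hardMES}.

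The harder direction, and the main obstacle, is to guarantee that in the No-case no single approval modification whatsoever can change the outcome, and that in the Yes-case no modification other than the pivotal one does. This amounts to making every score (respectively budget) margin separating the intended committee from all competitors strictly wider than the perturbation caused by one approval, while the single engineered margin between $g_2$ and $g_3$ is made exactly one approval wide. The existing $x^3$-scaling of the vertex electorate already renders all comparisons involving vertex candidates robust; what remains is a rule-by-rule check that every gadget-internal comparison except the designed one is robust as well, adjusting the multiplicity $y$ and the single-candidate fillers as needed. For \mes and \sphrag the same bookkeeping is carried out on the cost-sharing dynamics, and for complete \mes one must additionally verify that a single approval cannot perturb the Phase~2 completion; here the property already exploited in \Cref{thm:hardMES}, that the distinguished voter's remaining approvals are irrelevant to Phase~2, keeps the argument tractable.

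For membership in \NP, I would exploit that---unlike benefiting from abstention, whose Kelly comparison quantifies over all pairs of committees in the two outcome sets---a change of outcome is an existential statement: $f(A)\neq f(A')$ holds iff some committee lies in the symmetric difference of the two outcomes. I would therefore guess the single modified approval together with one such distinguishing committee $W$ and a valid picking sequence realizing $W$ in the profile where it wins; verifying the certificate reduces to deciding that $W$ is a possible winning committee in one profile and not a possible winning committee in the other. This single-committee possible-winner test is transparent and polynomial for \sphrag and \mes, since their cost-sharing processes are deterministic up to ties that can be resolved in favour of $W$, so one simulation decides both membership and non-membership. For sequential Thiele it needs a dedicated verification argument, and this is the delicate point of the upper bound, as the naive greedy ``pick any available candidate of $W$ in the argmax'' can fail along one ordering while another succeeds. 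Provided this test is polynomial, the certificate is polynomial-time checkable, placing the problem in \NP, which together with the reduction yields \NP-completeness; this existential structure is precisely what separates the single-approval problem from the abstention problems of \Cref{thm:SeqThieleNP,thm:hardMES}, which are only shown \NP-hard.
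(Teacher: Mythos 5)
Your hardness argument is essentially the paper's own: the paper likewise reuses the reductions of \Cref{thm:SeqThieleNP,thm:hardMES}, observes that for the tied second committee only the distinguished voter's approval of $g_1$ is relevant, so that adding or deleting this single approval creates or prevents the second outcome exactly when the source instance is a Yes-instance, and it likewise disposes of the No-case by remarking that the reductions can be modified so that no other single-approval change affects the outcome---the margin-widening you sketch is a reasonable way to substantiate that remark, and you are right to flag it as the place where real work remains. One mechanical slip, though: in the \spav reduction the two winning committees differ in $g_3$ versus $g_4$, and the decisive event is the tie between $g_1$ and $g_2$ immediately after $b$ and the $t$ independent-set candidates are seated, caused by the $\delta(1)$ drop in the score of $g_1$ when the voter's approval of $g_1$ is removed. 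It is not a last-seat contest between $g_2$ and $g_3$ won through her increased draw for $g_3$: her residual approval of $g_3$ changes that candidate's later marginal only from $\delta(2)$ to $\delta(1)$, which is swamped by the $2m$-scale margins and is never decisive. The conclusion survives---this is exactly the paper's point that only the $g_1$ approval matters---but your robustness checks must target the $g_1$-versus-$g_2$ comparison (and $g_1$ versus $g_4$ for \mes and \sphrag), not the contest you name.

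The genuine gap is the membership half. Your certificate requires verifying that the distinguishing committee $W$ is \emph{not} a possible outcome in the other profile, and neither of your routes closes this. For \sphrag and \mes, the claim that a single simulation with ties resolved in favour of $W$ decides non-membership is unsupported: when several $W$-candidates are affordable simultaneously and their supporter sets overlap, the order in which they are bought changes the remaining budgets, so the greedy-toward-$W$ process is not obviously confluent, and one failed run does not certify $W\notin f(A)$. For sequential Thiele rules you explicitly leave the possible-winner test open, so your upper bound is conditional on an unproven polynomial-time algorithm; the paper's own \Cref{cor:verification}, which shows that verifying the outcome of these elections is \coNP-complete, is a strong warning that such non-membership checks are unlikely to be polynomial. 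To be fair, the paper's proof sketch for this corollary argues only the hardness direction and never spells out membership either, so you are attempting more than it does; but as written, your proposal establishes \NP-hardness while the stated \NP-completeness remains open at precisely the step you yourself identify as delicate.
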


\section{Conclusion}

In this paper, we initiate the study of participation for ABC voting rules. This axiom describes that it can never be beneficial for voters to abstain and thus describes an incentive for active participation in elections. 
In more detail, we prove that all ABC scoring rules even satisfy group participation, thus generalizing a prominent phenomenon from single-winner voting to ABC elections. Moreover, we give a strong impossibility theorem demonstrating that most sequential rules (e.g., all sequential Thiele rules but \av, sequential Phragm\'en, and the method of equal shares) severely fail participation. 

In light of this strong negative result for sequential rules, we then explore various escape routes. In particular, we show that sequential Thiele rules and sequential Phragm\'en satisfy participation for voters who do not approve any candidate in the winning committee as well as participation on laminar profiles. These results demonstrate that sequential rules satisfy participation at least in important special cases. 
Finally, we also show that for all commonly studied sequential ABC voting rules, it is \NP-hard to decide for a voter whether she can benefit by abstaining. 
This indicates that, while voters can in general benefit by abstaining, they may not be able to recognize this. 
Our approach for deriving these results is very universal and allows us to recover, strengthen, and extend existing hardness results.

\section*{Acknowledgements}
We thank Felix Brandt, the participants of the Summer School on Computational Social Choice (University of Amsterdam, 2023), and the anonymous reviewers for helpful feedback.
This work was supported by the Deutsche Forschungsgemeinschaft under grant BR 2312/12-1 and by the AI Programme of The Alan Turing Institute.

\newpage
\appendix
\section{Notation and Definitions}\label{app:notation}
To facilitate the proofs for sequential Thiele rules, we introduce the following notation for the \emph{marginal scores} of the candidates. 
    Let $s$ be the scoring function of a sequential Thiele rule.
    Given a candidate $c$, an approval profile $A$, and a partial committee $P$, we define the \emph{marginal score} of $c$ with respect to $A$ and $P$ as 
    $$\sco[A]{c}{P} = \sum_{i'\in N_A} s(|A_{i'}\cap(P\cup\{c\})|)-s(|A_{i'}\cap P|)\text.$$
Note that maximizing the Thiele score of the committee $P\cup\{c\}$ for $c\in C\setminus P$
is equivalent to maximizing the marginal score, i.e., $\sco[A]{c}{P}$.

For \sphrag, we use an alternative definition based on the idea of load balancing that can be found in the book by \citet{LaSk22b}:
This is based on variables $y_r(i)$ for a given round $r$, denoting how many candidates have been elected so far, and voter $i$. 
Intuitively, $y_r(i)$ denotes the cost that voter $i$ contributed to the committee after the first $r$ members have been elected.
We start with $y_0(i)= 0$ for all $i\in N_A$. 
When fixing any not yet elected $c\in C$ and having already elected $r-1$ members $(c_1,\dots, c_{r-1})$, we now look at the maximum cost that some voter will have to contribute to elect $c$. We have to take into consideration everything the voters approving $c$, i.e., $N_A(c)$, have paid before and that the cost of candidate $c$ is one. The best way to split the costs such that the maximal total contribution will be minimized leads to each voter paying exactly $\ell_r(c) = \frac{1+\sum_{i_\in N_A(c)}{y_{r-1}(i)}}{|N_A(c)|}$. Here, we omit for readability the sequence $(c_1,\dots, c_{r-1})$ in the argument of $\ell_r$.
Hence, already having elected $(c_1, \dots, c_{r-1})$, {\sphrag} adds one of the candidates $c\notin \{c_1\dots c_r\}$ minimizing $\ell_r(c)$. Having added this candidate, say $c_r$, the loads that the voters have contributed so far are updated. We have $y_r(i)=y_{r-1}(i)$ for all $i\notin N_A(c_r)$, and $y_r(i)=\ell_r(c_r)$ for all $i\in N_A(c_r)$. 

\section{Proof of \Cref{cor:}}
We prove the corollary by showing that the considered rules satisfy continuity, standardness, and concurrence. 
Then, the assertion follows from \Cref{thm:iterativeNoShow}.
For continuity, we will use the following observation:

Let $A^*= \lambda A + A'$. We need to show for the rules that $g(A^*,S)\subseteq g(A,S)$ for large enough $\lambda$.
Hence, if $g(A,S)$ chooses the maximizers or minimizers of some function $v$, and $g(A^*,S)$ does the same with respect to $v_\lambda$, it suffices to show that $v_\lambda$ converges to $v$ pointwise (for $\lambda$ tending to infinity).
For a sequence of functions $h_\lambda$ and an additional function $h$, we write $h_\lambda \to h$ to denote that $h_\lambda$ converges to $h$ pointwise.

\subsection{Sequential Thiele rules.}
Consider a sequential Thiele rule given by a scoring function~$s$.
We first assume that 
$s(2)-s(1)< s(1)$.
\paragraph{Standardness}
For all $c\in C$, $\sco[A]{c}{\emptyset}= \sum_{i'\in N_A} s(|A_{i'}\cap(\{c\})|)-s(0) = |N_A(c)| s(1)$, proving the claim.
\paragraph{Continuity}
It suffices to show $\frac{\sco[A^*]{c}{P}}{\lambda} \to \sco[A]{c}{P}$.
Let $P\subset \mathcal{C}$ be a partial committee.
Then, $\sco[\lambda A+ A']{c}{P} =\lambda  \sco[A]{c}{P}+ \sco[A']{c}{P}$.
Clearly $\sco[A]{c}{P}+\frac{\sco[A']{c}{P}}{\lambda} \to  \sco[A]{c}{P}$. 
\paragraph{Concurrence}
Let $A$ and $c,d, (c_1,\dots c_\ell)$ be given as required.
Then, the marginal score of $c$ decreases equally for each ballot of the form $\{c_j,c\}$, analogously for $d$ with $\{c_j,d\}$. Since $d$ shares more such ballots, it has a smaller marginal score and can thus not be chosen.
\medskip

If we don't have $s(2) - s(1) < s(1)$, then, as $s$ does not represent $\av$, we have that $s(2+x)-s(1+x)< s(1+x) -s(x)$ for some minimal $x>0$.
Hence, we can simply modify the profile $A$ from the proof of \Cref{thm:iterativeNoShow}.
By adding $d_1\dots, d_x$ to each ballot in $A$, we know that the sequential Thiele rule must first choose all $d_j$ and then proceeds as in the previous case.

\subsection{Sequential Phragm\'en.}

Next, we consider {\sphrag}.

\paragraph{Standardness.}
It holds that $\ell_1(c) = \frac{1+\sum_{i_\in N(c)}{0}}{|N_A(c)|}$, which is minimized by the approval winners.

\paragraph{Continuity.}
For a profile $A$, let $y_r$ and $\ell_r$ be defined as above, where $y_r(i')=0$ if $i'\notin N_A$. For the profile $\lambda A+ A'$, we use $y_r^*$, $\ell_r^*$. Note that $y_r^*(i) = y_r^*(i')$ for all $i\in N_A$ and $i'$ that are clones of $i$. For readability, we will now write $N, N'$ instead of $N_A, N_{A'}$. 

Since we choose candidates with minimal load, it suffices to show $\lambda \ell_r^*\to \ell_r$ for establishing continuity.
For this, we show that $\lambda y_{r}^*(i)$ converges to $y_r(i)$ for all $i\in N\cup N'$.
We prove this by induction over $r\ge 0$.

For $r=0$, observe that it does not matter how we define $\ell_0$ and we can simply set $\ell_0(c)= 0= \ell^*_0(c)$ for all $c$. Further, $y_0(i)= 0= y_0^*(i)$ for all $i\in A^*$.

Assume the claim holds for $r-1$. 
Let us consider $\lambda\ell^*_r$. By splitting the sum in the numerator and dividing through $\lambda$ in the denominator, we obtain $\lambda\ell^*_r(c)=\frac{1+\sum_{i\in \lambda N(c)}{y^*_{r-1}(i)}+\sum_{i\in N'(c)}{y^*_{r-1}(i)}}{|N_A(c)|+ \frac{|N_{A'}(c)|}{\lambda}}$.
For the numerator we have $\sum_{i\in \lambda N(c)}{y^*_{r-1}(i)} = \sum_{i\in N(c)}{\lambda y^*_{r-1}(i)}$ which converges to $y_{r-1}(i)$ by induction. 
Similarly, $\sum_{i\in \lambda N'(c)}{y^*_{r-1}(i)}$ converges to $0$ by induction. 
The denominator converges to $|N(c)|$.
Hence $\lambda\ell^*_r(c)$ converges to $\ell_r(c)$. 
From this, it directly follows that in both cases $\lambda y_r^*(i)$ converges to $y_r (i)$. 

\paragraph{Concurrence.}
Let an approval profile $A$ and candidates $c,d$, and $(c_1,\dots c_\ell)$ be given as required. 
Further, let $t_1,\dots, t_\ell$ denote the times at which $c_1,\dots, c_\ell$ were bought for the committee. 
Then, the total budget of $c$ increases between all $t_j$ at least as fast as the total budget of $d$.
Additionally, since $c$ and $d$ are both unchosen so far, each voter approving $\{c_j, c\}$ has the same budget as each voter approving $\{c_j, d\}$ at any time $t\leq t_\ell$.
Thus, at each time $t_j$ for $j\leq \ell$, the budget of $d$ is decreased at least as much as the budget of $c$, and at some time it is decreased strictly more. 
In total, the budget of $c$ will always be strictly larger than that of $d$ from time $t_\ell$ onward, until the next candidate is chosen. Hence, $d\notin g(A,k,(c_1,\dots, c_\ell))$.

\subsection{MES.}
For \mes, we can show an even stronger statement:
For profiles $A$ where $|A_i|\leq 2$ for all $i\in N_A$ and $|N_A(c)|=|N_A(d)|\leq \frac{n}{k}$ for all $c,d\in C$, the rule is equal to \sphrag. For this, it suffices to show that after Phase~1, the budgets are the same.
For the direction from left to right, let $(c_1,\dots, c_\ell)\in \mathcal S$ be a sequence that can be bought in Phase~1.
This implies that there are no ballots of the form $\{c_j, c_{j'}\}$ with $j\neq j' \leq \ell$, as else the budget would not suffice.
Further, it must be that $|N_A(c_j)|= \frac{n}{k}$ for all $j$. The sequence is hence also valid under \sphrag, since at time $t= \frac{k}{n}$, the voters approving $c_j$ have enough money to buy this candidate into the committee for all $j\leq \ell$. For both rules, the voters approving the candidates are afterwards left with a budget of zero.

For the direction from right to left, let $(c_1,\dots, c_\ell)\in \mathcal S$ be a sequence that can be bought by \sphrag up to time $t= \frac{k}{n}$. 
Then, it must be the case that all of them have $N_A(c_j)= \frac{n}{k}$ and they share no ballots. Hence, they can also be bought in Phase~1 of \mes. Again in both cases, all voters approving some $c_\ell$ are left with a budget of zero, concluding the proof.

For a violation of participation by \mes that is severe and does not rely on Phase~2, note that the gadget from the proof of \Cref{thm:hardMES} (see \Cref{fig:MESgadget}) can be generalized such that, already after Phase~1, the abstaining agent only has one candidate in the committee when participating and all of her approved candidates if abstaining. This works for an arbitrary number of approved candidates.

\section{Proofs of \Cref{prop:unrep,prop:laminar}}
Next, we consider participation for unrepresented voters.

\unrep*
\begin{proof}
    We prove the claim here for sequential Thiele rules; the claim for \sphrag follows analogously. 
    Thus, let $f$ denote a sequential Thiele rule and let $s$ denote its scoring function. 
    Moreover, we let $A$ denote a profile, $k$ the target committee size, $i\in N_A$ a voter, and $W\in f(A,k)$ a winning committee with $W\cap A_i = \emptyset$.
    
    Our goal is to show that $f(A_{-i}, k)\not \succ_i f(A,k)$.
    For this, we show two claims, namely first that $W\in f(A_{-i},k)$ and second if there is $W'\in f(A_{-i},k)$ with $W'\cap A_i\neq\emptyset$, then there is also $W''\in f(A,k)$ with $W''\cap A_i\neq\emptyset$. 
    
    First, we show that $W\in f(A_{-i},k)$.
    To this end, let $S=(c_1,\dots, c_k)$ denote the sequence in which the candidates in $W$ get picked. The central observation is now that $\hat s(A_{-i},\{c_1,\dots, c_{\ell}\})=\hat s(A,\{c_1,\dots, c_{\ell}\})$ for all $\ell\in \{1,\dots, k\}$ since voter $i$ does not approve any of the candidates in $W$. 
    Moreover, it holds that $\hat s(A, \{c_1,\dots, c_\ell\})\geq \hat s(A, \{c_1,\dots, c_{\ell-1},x\})\geq \hat s(A_{-i}, \{c_1,\dots, c_{\ell-1},x\})$ for all $x\in C\setminus \{c_1,\dots, c_\ell\}$, where the first inequality follows from definition of sequential Thiele rules, and the second one that voter $i$ may approve $x$ and thus reduce the points of committee by abstaining. 
    Chaining our inequalities, we thus infer that $\hat s(A_{-i},\{c_1,\dots, c_{\ell}\})\geq \hat s(A_{-i}, \{c_1,\dots,c_{\ell-1},x\})$ for all $\ell \in \{1,\dots, k\}$ and $x\in C\setminus \{c_1,\dots, c_\ell\}$, so $S$ is also a chosen sequence for $A_{-i}$ and $W\in f(A_{-i}, k)$.

    For the second point, we assume that there is $W'\in f(A_{-i},k)$ such that $W'\cap A_i\neq\emptyset$ and let $S'=(c_1',\dots, c_k')$ denote the corresponding picking process. 
    Now, suppose that $(c_1',\dots, c_\ell')$ is the longest prefix of $S'$ that is also valid for $A$ (if $\ell=0$, this is the empty sequence). 
    If voter $i$ approves any candidate in this prefix, then there is a committee $W''\in f(A,k)$ with $W''\cap A_i\neq \emptyset$ and we are done. Hence, suppose that $\{c_1',\dots,c_\ell'\}\cap A_i=\emptyset$, which implies that $\ell<k$ as $W'\cap A_i\neq\emptyset$. In this case, it is easy to see that $\hat s(A,\{c_1,\dots, c_{\ell+1}\})\geq \hat s(A_{-i}, \{c_1,\dots, c_\ell,x\})$ for every $x\in C\setminus (\{c_1,\dots, c_\ell\}\cup A_i)$ as voter $i$ does not change the score of $x$ but may increase the score of $c_{\ell+1}'$. Hence, since the candidate $c_{\ell+1}'$ is not valid anymore as $(c_1',\dots, c_\ell')$ is the longest valid prefix of $S$, this means that we add next a candidate that is approved by voter $i$, so there is a committee $W''\in f(A,k)$ with $W\cap A_i\neq\emptyset$.

    We next turn to the counter example demonstrating that \mes fails participation for unrepresented voters. To this end, we consider the following profile $A$ consisting of $51$ voters and set our target committee size to $5$.\smallskip
    
    \begin{tabular}{lll}
        $20\times \{x_1,x_2,x_3\}$ & $10\times \{y_1,y_2\}$ & $10\times \{y_1,y_2,z\}$ \\
        $9\times \{z,c\}$ & $2\times \{c\}$
    \end{tabular}\smallskip

    We claim that $c\not\in W$ for every committee $W\in f(A,5)$, but $c$ will be chosen if one of the voters uniquely approving it abstains. 
    We start by analyzing the profile $A_{-i}$ where one of these voters abstains. In this case, there are exactly $50$ voters, so every voter has a budget of $\frac{1}{10}$. It is now easy to check that two of the candidates $\{x_1,x_2,x_3\}$ and both $y_1$ and $y_2$ will be bought for a price of $\frac{1}{20}$. Finally, as fifth candidate $c$ will be bought for a price of $\frac{1}{10}$. 

    By contrast, in the profile $A$, every voter has a budget of $\frac{5}{51}$, so we can only buy one candidate from $\{x_1,x_2,x_3\}$ and one from $\{y_1,y_2\}$ (again for a price of $\frac{1}{20}$). In the next step, we then buy candidate $z$ for a price of $\frac{53}{918}$. Afterwards, Phase~1 of \mes is complete as no candidate can be afforded. It can be checked that the voters approving $\{x_1,x_2,x_3\}$ have in total a budget of $20(\frac{5}{51}-\frac{1}{20})$ left, the voters who approve $\{y_1,y_2\}$ have a budget of $10(\frac{5}{51}-\frac{1}{20})$ left, and the voters approving $c$ and $\{c,z\}$ have a budget of $11\cdot \frac{5}{51}-9\cdot \frac{53}{918}=\frac{19}{34}$. 
    Since the candidates in $\{x_1,x_2,x_3\}$ and in $\{y_1,y_2\}$ are approved by $20$ voters whereas $c$ is only approved by $10$ voters, it can now be easily verified that we will buy one candidate of both of these two sets in Phase~2 of \mes instead of $c$.
    
    Hence, for~$A$, no committee containing $c$ is chosen whereas every committee contains $c$ in $A_{-i}$. 
    We note that this example also generalizes an observation by \citet{LaSk22b} which implies that \mes fails participation for unrepresented voters when using a suitable tie-breaking mechanism.
\end{proof}

Now, we consider participation on laminar profiles.

\laminar*

\begin{proof}
    The proof follows three key steps. First, we show that laminar profiles can be represented as directed forests. Second, we show how to locate valid sequences at the tops of trees. Third, we analyze changes in the score when adding candidates to the committee or a voter joins the election.
    
    We first consider sequential Thiele rules given by a scoring function $s$.
    We assume first that the scoring function is strictly increasing, i.e., that $s(x) < s(x+1)$ for all $x\ge 0$.

    \textbf{Step 1: Laminar profiles are directed forests.}
    First of all, we show that laminar profiles resemble a collection of directed trees. Even stronger, each node, up to clones, represents a ballot and has an in-degree of at most one.
    Formally, define the relation $\to$ as follows:
    Let $A$ be a laminar profile.
    $x\rightrightarrows y$ iff there is $i\in N_A$ with $x,y\in A_i$ and there is $j$ with $x\in A_j, y\notin A_j$. In other words, $x\rightrightarrows y$ iff $\emptyset\neq N_A(y) \subset N_A(x)$ This relation is asymmetric, but transitive, and we hence thin it out by defining $x\to y$ iff $x\rightrightarrows y$ and there is no $z\neq x$ such that $x\rightrightarrows z$ and $z\rightrightarrows y$.
    We claim that every node can have at most one predecessor with respect to $\to$, up to clones. 
    To show this, let $x\to z$ and $y \to z$. Then, this means that there is some $i$ with $z,x\in A_i$, which implies $x,y,z\in A_i$. Thus, the support of $x,y$ is not disjoint and one of the subset-relations must hold. Without loss of generality, let $N_A(x)\subseteq N_A(y)$. If it were that $\emptyset\neq N_A(x)\subset N_A(y)$, we could infer $y\rightrightarrows x$, which contradicts $y\to z$.
    Hence, $N_A(x)= N_A(y)$, implying that $x=y$ or $y$ is a clone of $x$.
    Hence, we can identify each $c$ that appears in the forest (including all its clones $x,y,\dots$) with the ballot including all clones and predecessors that have to appear as soon as $c$ is contained, i.e., $B(c) = \{d |N_A(d) = N_A(c) \lor d\to \dots \to c\} = \{d |N_A(c) \subseteq N_A(d)\}$. Further, we can assign to each node (=candidate) $c$ as weight the number of ballots $A_i$ that are equal to $\{d\in C| N_A(d)\subseteq N_A(c)\}$. This then fully identifies the laminar profile.

    \textbf{Step 2: Valid sequences are located at the tops of the trees.}
    We now claim the following, which will help us for the proof of participation:
    Let $(c_1,\dots, c_\ell)$ be a valid sequence w.r.t. $A$. If $c\in g(A,k,(c_1,\dots c_\ell))$, $c$ must be maximal w.r.t. $\to$ (and $\rightrightarrows$) except for $c_j$ with $j\leq \ell$.
    Let $c\to d$ and both unchosen. Since $N_A(d)\subset N_A(c)$, every voter approving $d$ must also approve $c$. Hence, for the committee $W=\{c_j | j\leq \ell\}$, we have $|A_i \cap W\cup \{c\}| = |A_i \cap W\cup \{d\}|$
    and hence $ s(|A_i \cap W\cup \{c\}|) = s(|A_i \cap W\cup \{d\}|)$ for all $i\in N_A(d)$. (For $\ell=0$, $W=\emptyset$.) Further, there must be some voter $i\in N_A(c)\setminus N_A(d)$. For this voter, $|A_i \cap (W\cup \{c\})| > |A_i \cap (W\cup \{d\})|$ and hence
    $s(|A_i \cap (W\cup \{c\})|)> s(|A_i \cap (W\cup \{c\})|)$. For all other voters $i'$, $|A_{i'} \cap (W\cup \{c\})| \geq |A_{i'} \cap (W\cup \{d\})| = |A_{i'} \cap W|$ 
    We conclude $\hat s(A, W\cup\{c\})> \hat s(A, W\cup\{d\})$ and $d\notin g(A,k,(c_1,\dots, c_\ell))$, as desired.
    
    \textbf{Step 3: Changes in score when $A_i$ joins or $a\in A_i$ is added to the sequence.}
    Let now $P\subset C$, $A$ be any laminar profile, and $a,b,c\in C\setminus P$.
    We further require that $P$ is located at the top of the trees, i.e., there is no candidate in $C \setminus P$ that dominates any candidate in $P$ w.r.t. $\rightrightarrows$. Further, we require that $a$ is located directly beneath $P$, i.e., if $a$ has a predecessor w.r.t. $\to$, it is contained in $P$.
    We now investigate how the scores are affected by adding $a$ to $P$.

    Let $c$ be from a different sub-tree than $a$, i.e., $N_A(c)\cap N_A(a) = \emptyset$. Then,
    the marginal contribution of $c$ remains the same when adding $a$ to $P$:
    $\sco[A]{c}{P\cup\{a\}} 
    =\sum_{i'\in N} s(|A_{i'}\cap(P\cup\{a,c\})|)-s(|A_{i'}\cap (P\cup\{a\})|)
    =0 + \sum_{i'\in N_A(c)} s(|A_{i'}\cap(P\cup\{a,c\})|)-s(|A_{i'}\cap P\cup\{a\}|) = 0 + \sum_{i'\in N_A(c)} s(|A_{i'}\cap(P\cup\{c\})|)-s(|A_{i'}\cap P|)=
    \sum_{i'\in N} s(|A_{i'}\cap(P\cup\{c\})|)-s(|A_{i'}\cap P|)
    = \sco[A]{c}{P}\text.
    $
    
    Now, let $b,c$ be successors or clones of $a$, i.e., $N_A(b)\subseteq N_A(a), N_A(c)\subseteq N_A(a)$ We claim that if $b$ has a weakly larger marginal contribution than $c$ before adding $a$, the same holds true after adding $a$.
    
    After adding $a$, we have
    \begin{align*}
        &\sco[A]{c}{P\cup\{a\}} - \sco[A]{b}{P\cup \{a\}} \\
        =&\sum_{i'\in N_A} s(|A_{i'}\cap(P\cup\{a,c\})|)- s(|A_{i'}\cap(P\cup\{a,b\})|) \\
        &+ s(|A_{i'}\cap (P\cup\{a\})|)-s(|A_{i'}\cap (P\cup\{a\})|)\\     
        =&\sum_{i'\in N_{A}(b)\cup N_A(c)} s(|A_{i'}\cap(P\cup\{a,c\})|)- \\
          &\qquad \qquad \quad\quad s(|A_{i'}\cap(P\cup\{a,b\})|)\\
        =&\sum_{i'\in N_A{(b)}\cup N_A(c)} s(|A_{i'}\cap\{c\}|+x)- s(|A_{i'}\cap\{b\}|+x)\\
        =&|N_A(c)\setminus N_A(b)| (s(x +1)-s(x)) \\
         -&|N_A(b)\setminus N_A(c)| (s(x+1)- s(x))\text.
    \end{align*}
    Here, the second to last line holds for some $x$ we now specify. This is because all ballots containing $b$ or $c$ must contain all their respective predecessors, including $a$, all clones of $a$ and its predecessors. Further, since the sequence constructing $P$ and adding $a$ is located at the top of the trees, $P\cup \{a\}$ must consist of all predecessors of $a$ and some clones of $a$ (including $a$ itself). We thus denote them by $P_a\subseteq P$. Let $x = |P_a|$. Then, for all $i'\in N_A(b)\cup N_A(c)$, we have that 
    $A_{i'}\cap (P\cup\{a\}) = P_a $.

    Before adding, we obtain with the same reasoning
    \begin{align*}
        &\sco[A]{c}{P} - \sco[A]{b}{P} \\
        =&|N_A(c)\setminus N_A(b)| (s(x)-s(x-1)) \\
         -&|N_A(b)\setminus N_A(c)| (s(x)- s(x-1))\text.
    \end{align*}
    This proves the claim.
    
    Now, consider the addition of some voter $i$ to $A_{-i}$. 
    For $c$ with $c\notin A_i$, the marginal score cannot change as we have
    $A_{i}\cap(P\cup\{c\}) = A_{i}\cap P$ and hence
    $\sco[A]{c}{P}  - \sco[A_{-i}]{c}{P} =  s(|A_{i}\cap(P\cup\{c\})|)-s(|A_{i}\cap P|) = 0$.
    On the other hand, if $c\in A_i$, it must be that the marginal score increases, as $\sco[A]{c}{P}  - \sco[A_{-i}]{c}{P} =  s(|A_{i}\cap(P\cup\{c\})|)-s(|A_{i}\cap P|) = s(|A_{i}\cap P|+1)-s(|A_{i}\cap P|) \geq 0$.

    \textbf{Sequence comparisons.}
    We now use Step 3 to show that the valid sequences in essence do not change when $i$ joins the election, except for the fact that there can be more candidates $c\in A_i$ appearing, and the appearance can be earlier. 

    This is due to the following: For the first candidate of the sequence, this is clear since we use 
    \texttt{AV} and the only change so far is that $A_i$ is added to $A_{-i}$. For the induction step, let the sequence $\vec c = (c_1,\dots, c_\ell)$ be given that is valid for $A_{-i}$, and let $\vec d=(d_1,\dots,d_\ell)$ be valid for $A$ and obtained from $\vec c$ such that:
    \begin{itemize}
        \item All $c_j\in A_i$ appear and are in the same order in $\vec d$ as in $\vec c$.
        \item Further $c\in A_i$ with $c\neq c_j$ for all $j\leq \ell$ can occur in $\vec d$, but only after the ones that already were in $\vec c$.
        \item All other $d_j$ occurring in $\vec d$ must satisfy $d\notin A_i$. They must appear in the same order as in $\vec c$, and with no $c_j\notin A_i$ left out. 
    \end{itemize}

    To give an example, let $\vec c= (c_1,c_2,c_3,a_1,a_2)$ and $A_i= \{a_1,a_2, a_3\}$.

    Then, $\vec d= (a_1, c_1,c_2,a_2,c_3)$, as well as $\vec d= (a_1, a_2,c_1,c_2,a_3)$ would satisfy all conditions imposed.
    On the other hand, $\vec d= (a_1, c_2,c_1,a_2,c_3)$, 
    $\vec d= (a_1, c_1,c_2,a_3,a_2)$, and $\vec d= (a_1, c_1,a_2,a_3,c_3)$ would violate some condition.

    For the induction start, note that $\ell =0 $ is trivial, as for $\vec c$ of length $0$ there is some $\vec d$ of the same length satisfying all conditions, namely the empty sequence $\vec d = \emptyset$. 

    For $\ell\to \ell+1$, let $\vec c^*$ of length $\ell+1$ be given.
    Without loss of generality, we can assume that our sequence chooses candidates of $A_i$ whenever it can. This is because we first choose all clones before we choose their successors. Hence, in each valid sequence we can permute the clones and the sequence still remains valid. If $A_i$ is the only ballot of its type in $A$, it can be that it distinguishes some candidates contained in $A_i$ from others not contained in $A_i$. Choosing the ``clones'' with respect to $A_{-i}$ that are contained in $A_i$ hence always leads to an outcome that voter $i$ weakly prefers. 
    We can use the induction hypothesis on the first $\ell$ entries of $\vec c^*$ to obtain $\vec d$ of length $\ell$.
    If we can extend $\vec d$ with some candidate from $A_i$, we are done as we can choose some $c\in A_i$ that satisfies all conditions. 
    If $\vec d$ and $\vec c$ contain exactly the same elements, we are also done as we can simply extend $\vec d$ with $c_{\ell+1}$ or some candidate in $A_i$.
    Else, let $\vec d$ contain some candidate $d_x$ be the first candidate not occurring in $\vec c$ (then, $d_x\in A_i$) and consider 
    the first $c_j$ that is not present in $\vec d$ (then, $c_j\notin A_i$). We claim that we can extend $\vec d$ with $c_j$.
    To see this, note that it is valid to extend $(c_1,\dots, c_{j-1})$ with $c_j$ with respect to $A_{-i}$. To obtain $\vec d$ from $(c_1,\dots, c_{j-1})$, we only add candidates approved in $A_i$, call them $a_1,\dots$.
    Assume for contradiction that $c_j$ is a successor of one of the $a_y$ with respect to $A$. Then, we could not have chosen $c_j$ before $a_i$ in $A_{-i}$, a contradiction. (This holds clearly in the case that $c_j$ is a successor in $A_{-i}$ too. Else, it must be a clone of some $a\in A_i$ with respect to $A_{-i}$. But then, we would prefer to choose $a$ over $c_j$, so this cannot be the case either.) 
    Since $c_j$ is not a successor to any of the $a_y$, its marginal score remains unchanged by their addition, while all other marginal scores can only decrease. Finally, when adding $i$, only the scores of candidates that $i$ approves of increase whil all others remain the same. Since the former are not chosen by assumption, $c_j$ still has maximal score and is chosen.
    This proves that for every committee in $f(A_{-i},k)$, there is a committee in $f(A,k)$ that voter $i$ prefers at least as much. Analogously, we can show that for every committee in $f(A,k)$ there is some committee in $f(A_{-i},k)$ that $i$ prefers less, concluding the proof that Thiele rules with $s(1)< s(2)< \dots$ satisfy participation on laminar profiles.

    Now consider a general Thiele rule where the scoring function is only weakly increasing.
    Then, if there exists $x\ge 1$ with $s(x)=s(x+1)$, by definition of Thiele scoring functions, $s(x')=s(x'+1)$ for all $x'\ge x$.
    We can now proceed with the same proof by induction, where the start remains trivial. To show that for each $W\in f(A_{-i},k)$ there is some $W'\in f(A,k)$ such that $|W'\cap A_i|\geq |W\cap A_i|$, we can simply take a valid partial sequence of length $\ell-1$ and substitute it with a sequence that is located at the top of the trees. More precisely, as soon as we can choose a candidate that is not located as high as possible on the tree, the marginal contribution of this candidate (and hence all others) is zero, and we can hence choose everything and extend the sequence as we wish.
    To show that for each $W'\in f(A,k)$ there is some $W\in f(A_{-i},k)$ such that $|W'\cap A_i|\geq |W\cap A_i|$, the induction start is again trivial. In the induction step we can again make a case distinction: If so far the sequence is located at the top of the tree, we can continue as before. Else, we chose a candidate with marginal contribution $0$ w.r.t. $A$. Hence, all remaining candidates from now on have marginal contribution $0$, which does not change if $A_i$ abstains. This concludes the proof for sequential Thiele rules.

    For \sphrag, the proof is analogous to the tie-less case of sequential Thiele rules.
    \textbf{Step 2} works, since $N_d\subset N_c$ implies that the total budget for $c$ will be strictly larger than the total budget of $d$ at any time $t>0$ where both are not bought yet.

    \textbf{Step 3} is substituted by the following:
    Let $A$ be laminar. 
    Instead of $\ell_r(c)$,  given a valid sequence $(c_1,\dots,c_\ell)$ for a profile $A$ we now more precisely write $\ell(A,(c_1,\dots,c_\ell), c)$  for the time at which $c$ would be bought if it was joining the committee next.
    Since the query function of \sphrag does not depend on $k$, we can omit it.
    Then, let $\vec d = (d_1,\dots,d_{\ell'})$ and $\vec c = (c_1,\dots,c_{\ell})$ be given with the same requirements as for sequential Thiele rules.

    To prove this claim, we use induction:
    $\ell= 0$ is trivial.
    $\ell\to \ell+1$, we obtain that  $\ell(A,(d_1,\dots,d_{\ell'}), d) = \ell(A_{-i},(c_1,\dots,c_\ell), d)$. The last $d_{\ell' +1}$ must also be equal to some $c_{j_{\ell'+1}}$. Let $d\notin A_i$. The $c_j$ in-between do not change the $y(i)$ of any $i\in N_A(d)$ by definition. Hence, also $\ell_r(c)$ remains the same, proving the claim.

    From this, we can directly infer the following: For $c\in A_i$, $\ell_(A,(c_1,\dots,c_\ell),c)$ will only decrease. Hence, we obtain sequences of the form $(c_1,\dots)$ when $i$ abstains and sequences of the form $(d_1,\dots)$ when $i$ participates with the same conditions imposed as for the proof of sequnetial Thiele rules.
    
    Thus, for every committee $W$ that can be chosen when $i$ abstains, there is another committee $W'$ that is chosen when $i$ participates and is weakly preferred by $i$. Vice versa, removing the ballot $A_i$ only increases the $\ell_r(c)$ of $c\in A_i$ while leaving all others intact. Hence, for every committee $W' \in f(A,k)$ there is another committee $W\in f(A_{-i},k)$ such that $i$ weakly prefers $W'$ to $W$.
    This concludes the proof for \sphrag.

For \mes, we again show that the valid sequences stay the same except for candidates $c\in A_i$ which can now appear earlier and more frequent in the sequences. To formalize this, we need to take into account both phases.
Note that due to Step~2, the order of the sequence in Phase~1 is fixed up to permutation of the clones.
For  $(c_1,\dots,c_\ell)$ that is valid and concludes Phase~1 , let $b(c,(c_1,\dots,c_\ell)) = b(c,(c_1,\dots,c_\ell,\dots))$ denote the sum of the starting budgets of the voters $i\in N_A(c)$ for  Phase~2.
Analogously to \sphrag, for  Phase~2, we do not denote the times as $\ell_r(c)$. Instead, given a valid sequence $(c_1,\dots,c_\ell)$ for a profile $A$ we now write $\ell(A,k,(c_1,\dots,c_\ell), c)$  for the time at which $c$ would be bought if it was joining the committee next. Also, the numerator replaces the $1$ with $1-b(c, (c_1,\dots,c_\ell))$ since the voters already have the starting budget.

The following statement will be helpful: Let  $(c_1,\dots,c_\ell)$ be valid and conclude Phase~1 (for $A$ or $A'$). For $c\notin \{c_1,\dots,c_\ell\}$, all voters $i\in N_{A}(c)$ ($i\in N_{A'}(c)$) have the same remaining budget.
Note that the ballots of these voters $i$ are represented precisely by the clones and successors of $c$ in the forest.
The proof by induction proceeds over valid sequences that are queried in Phase~1 of length $\ell$. For the empty sequence, it holds true. Now, for $\ell\to \ell+1$, adding a candidate $c$ in Phase~1 that apart from $\{c_1,\dots,c_\ell\}$ is at the top of some tree, the budgets of any $d$ from another tree does not change. The same holds for $d$ that are in the same tree as $c$, but not clones or successors of $c$. 
In the tree of $c$, note that by induction all voters $i$ that are represented by $c$, its clones, or its successors have the same budget left. Hence, the optimal way to balance the load always leads to them paying the same amount, concluding the induction step.

From this, we can in fact deduce that once we bought a node and all it's clones of some tree into the committee, the next chosen successor from this node (if there will be any at all) will be exactly the one with the highest approval score. As soon as we cannot afford a root or one of its clones, we cannot afford any other candidate from this tree in Phase~1.

Again, impose restrictions on $(d_1,\dots,d_{\ell'})$ and $(c_1,\dots,c_{\ell})$ which are valid for $A, A_{-i}$ respectively.
Then $\ell(A,(d_1,\dots,d_{\ell'}), d) = \ell(A_{-i},(c_1,\dots,c_\ell), d)$ for all $d\notin A_i$. 
The challenge is to manage the transition between the two phases.
$\ell= 0$ is trivial.
Let now $\ell>0$ and $(c_1,\dots,c_{\ell})$ be given.
For Phase~1, observe that voters with disjoint ballots from $A_i$ (i.e., belonging to other trees) cannot change the budget of voters with ballots not disjoint from $A_i$ by spending money. Hence, every tree buys the candidates on its own. Further, once the root and all its clones of a tree are bought and the budgets are updated, for the rest of the process we can remove the former and consider its successors to be roots. 
Also, observe that $k$ candidates can only be bought if the voters spend all their money perfectly. Hence, in this phase the different sub-trees do not compete for who gets chosen. We thus do not need to analyze thresholds between different (sub-)trees and look at each (sub-)tree individually.
Phase~1 in $A_{-i}$ was executed with a budget per voter of $\frac{k}{n-1}$.
Hence, the voters with ballots not disjoint from $A_i$, denote their number by $n_T$, had a joint budget of $(n_T-1) \frac{k}{n-1}$ to buy candidates. When $i$ joins, the budget changes to $(n_T) \frac{k}{n}\geq (n_T-1) \frac{k}{n-1}$. Hence, the voters can only buy more candidates from $A_i$ in total. This remains true after the root $a\in A_i$ is bought and we analyze the sub-tree where the root is again contained in $A_i$.
Hence, Phase~1 yields weakly more elements from $A_i$ if $i$ participates.
Further, observe that the total budget of all voters strictly decreases. Hence, for all other trees, it holds that they cannot buy more candidates in the Phase~1 when $i$ participates.
In Phase~2, these trees lose time as they first have to gather money for the candidates that they could not buy in Phase~1. Even if no such candidates exist, they still lose time as their starting budget decreased and they have to first gather money to get back to the budget that they had when $i$ abstained. Hence, also combined from Phase~1 and 2, other trees do not pose a problem for voter $i$ as they cannot elect more candidates than when $i$ abstained.
In the tree of voter $i$ itself, note that all candidates benefit from voter $i$ joining, as this gives a head start with respect to the starting budget in Phase~2: Either the candidates can now already be bought in Phase 1, or we are closer to buying them in Phase~2. Also, in Phase~2 all successors of each candidate in $A_i$ benefit from the lesser load the have to pay when one of the $A_i$ is elected: if some $c\in A_i$ needs to be bought before a successor $c\to d$, and $i$ joins, then $c$ will be bought earlier and the voters approving $d$ have more time to start from scratch and generate money for $d$.
Further, only the candidates in $A_i$ benefit from a faster money generation rate when $i$ participates. 

We can hence proceed as for sequential Thiele rules and \sphrag to obtain the result for \mes.
 \end{proof}

\section{Proofs Concerning Hardness of Abstention}

In this appendix, we provide the missing proofs from \Cref{subsec:hardness}.

We start with the complete proof of the hardness result for all Thiele rules.

\SeqThieleNP*

\begin{proof}
    Let $s\colon \mathbb N_0 \to \mathbb Q$ be any scoring function inducing a Thiele rule.
    Since we assume concavity, for $x\ge 1$, it holds that $s(x+1)-s(x) \le s(x) - s(x-1)$.
    Assume that $s$ is not the function defined by $s(x) = \alpha x$ for all $x$ and some $\alpha > 0$, i.e., represents a Thiele rule distinct from {\av}.

    For $x\ge 1$, let $\delta(x) = s(x) - s(x-1)$.
    Hence, concavity implies that $\delta(x+1)\le \delta(x)$ for all $x\ge 1$.
    Since $s$ does not represent {\av}, we know that $\delta(y+1) < \delta(y)$ for some $y\ge 1$.
    By concavity, this implies $\delta(y') < \delta(y)$ for all $y'\ge y+1$. 

    Also, we claim that there exists a $Y\ge y+1$ such that $\delta(Y-1) - \delta(Y) > \delta(Y) - \delta(Y+1)$.
    Indeed, if this was not the case, then, for every $\ell\ge y+1$, it would hold that $\delta(\ell) - \delta(\ell+1) \ge \delta(\ell-1)-\delta(\ell) \ge \dots \ge \delta(y) - \delta(y+1)$.
    Hence, $\delta(\ell + 1) \le \delta(\ell) - (\delta(y) - \delta(y+1))$, and inductively $\delta(\ell + 1) \le \delta(y) - \ell(\delta(y) - \delta(y+1))$.
    Since the right-hand side is negative for large $\ell$, we obtain a contradiction to the monotonicity of $s$.

    Hence, there exists $Y\ge 2$ with $\delta(Y-1) - \delta(Y) > \delta(Y) - \delta(Y+1)$.
    Without loss of generality,\footnote{Otherwise, one can enhance the reduction by the standard trick of adding $Y-2$ candidates approved by all voters \citep[see, e.g.,][]{JaFa23b}. Then, the sequential Thiele rule has to select these candidates first, and the election behaves as if starting with the $(Y-1)$st candidate with the capped weight function.} we may assume that this already happens for $Y =2$, i.e., we assume that
    
    \begin{equation}\label{eq:decay}
        \delta(1) - \delta(2) > \delta(2) - \delta(3)\text.
    \end{equation}

    Note that \Cref{eq:decay} together with concavity directly implies that $\delta(2) < \delta(1)$.

    We are ready to perform our reduction from \textsc{IndependentSet} for cubic\footnote{Cubic graphs are defined as those graphs where every vertex has degree exactly $3$.} graphs \citep{GaJo79a}.
	
	Assume that we are given an instance $(G,t)$ of \textsc{IndependentSet} where $G = (V,E)$ is a cubic graph and $t$ is an integer (target size of the independent set).
    Without loss of generality, we assume that we only consider instances where $|V|\ge 2$ and $|E|\ge 3t$. (If $|E| < 3t$, then there cannot exist an independent set of size $t$ in a cubic graph.)

    We construct the reduced instance.
    
	The set of candidates is $C = \{g_i\colon i\in [4]\}\cup\{b\}\cup C_V$, where
    $C_V = \{c_v\colon v\in V\}$.
	The role of the candidates is as follows:
	\begin{itemize}
		\item Candidates $g_i$ form a gadget in which abstention might be performed.
		\item Candidates $c_v$ represent vertices $v$.
        \item Candidate $b$ is very strong, and will always be selected first. This helps to balance scores in the gadget.
	\end{itemize}

    Let $n = |V|$ and $\alpha$ a positive integer such that $\frac {\alpha}4\frac{\delta(1)-\delta(2)}{\delta(1)}$ is an integer and $\alpha(\delta(1)-\delta(2))\ge \delta(1)$.
    Choosing such an $\alpha$ is possible because $\delta(1) > \delta(2)$ and $\delta(1)$ and $\delta(2)$ are rational numbers.
    Let $m = \frac {\alpha}2\left(n^4 -\left(t-\frac 12\right) n^3\frac{\delta(1)-\delta(2)}{\delta(1)}\right)$. 
    By the choice of $\alpha$, this is an integer.
    The voters with their approval sets are as follows:
	\begin{itemize}
		\item For each vertex $v\in V$, there exist $\alpha n^3$ voters approving $\{c_v\}$.
		\item For each pair of vertices $\{v,w\}\subseteq V$, there exist $\alpha n^3$ voters with approval set $\{c_v,c_w\}$.
		\item For each edge $\{v,w\}\in E$, there exists one voter with approval set $\{c_v,c_w,g_1\}$.
        \item There exist $\alpha n^5$ with approval set $\{b\}$.
        \item There exist $3t$ voters with approval set $\{b,g_2\}$.
		\item Moreover, there are voters approving only the gadget candidates. These are 
		\begin{itemize}
			\item $m$ voters for each of the approval sets $\{g_1,g_2\},\{g_1,g_3\},\{g_2,g_4\},\{g_3,g_4\}$,
			\item $|E|-3t$ voters approving $\{g_2\}$ (this is well-defined because $|E|\ge 3t$), and
			\item one voter approving $\{g_1\}$.
		\end{itemize}
	\end{itemize}

    As usual, $A$ denotes the approval profile.
	The target committee size is $1 + n +3$. 
	We write the target size as this sum to hint at the fact that we are selecting all candidates in $\{b\}\cup C_V$ as well as $3$ gadget candidates.
	We claim that 
	
	\begin{quote}
		a voter with approval set $\{g_1,g_3\}$ can benefit from abstention if and only if the source instance is a Yes-instance.
	\end{quote}

    We use the notation of \emph{marginal scores} of the candidates as introducted in \Cref{app:notation} and recall the definition here.
    Given a candidate $c$, an approval profile~$A$, and a partial committee $P$, we define the \emph{marginal score} of $c$ with respect to $A$ and $P$ as 
    $$\sco[A]{c}{P} = \sum_{i\in N} s(|A_i\cap(P\cup\{c\})|)-s(|A_i\cap P|)\text.$$

    We are ready to perform the election in the reduced instance.
	The initial scores of the candidates are
	\begin{itemize}
		\item $\sco{b}{\emptyset} = (\alpha n^5 +3t)\delta(1)$,
		\item $\sco{c_v}{\emptyset} = (\alpha n^4 + 3)\delta(1)$ for all $v\in V$, because $c_v$ occurs in $n-1$ pairs approved by $\alpha n^3$ voters each, $\alpha n^3$ times in a singleton approval set, and $v$ is incident to $3$ edges of $G$,
		\item $\sco{g_1}{\emptyset} = (2m + |E| + 1)\delta(1)$,
		\item $\sco{g_2}{\emptyset} = (2m + |E|)\delta(1)$, and
		\item $\sco{g_3}{\emptyset} = \sco{g_4}{\emptyset} = 2m\delta(1)$.
	\end{itemize}

    Note that $(2m + |E| + 1)\delta(1) < \alpha n^4 \delta(1) + n^2\delta(1) < \alpha n^5 \delta(1)$ for $n\ge 2$.
	Hence, we have to select $b$ as the first candidate.

	Next, we show that the best candidates to select are candidates of the type $c_v$.
	Assume that $\{b\}\cup S'$ is the tentative committee consisting of $\ell + 1$ candidates, where $S'\subseteq C_V$ with $0\le |S'| = \ell\le t-1$.
    Then, for $c_v\in C_V\setminus S'$, it holds that 
    
    \begin{align*}
        & \sco{c_v}{\{b\}\cup S'} \ge \alpha n^4\delta(1) - \ell \alpha n^3(\delta(1)-\delta(2))\\
        & \phantom{score}\ge \alpha n^4\delta(1) - (t-1)\alpha n^3(\delta(1)-\delta(2))\text.
    \end{align*}
	There, we only count the contribution to the score by voters with an approval set $\{c_v,c_w\}$ for any pair $\{v,w\}\subseteq V$ or an approval set $\{c_v\}$.
	
	However, for $i\in [4]$, it holds that
    \begin{align*}
        & \sco{g_i}{\{b\}\cup S'}  \le (2m + |E| + 1)\delta(1)\\
        & < \alpha n^4\delta(1) - (t-\frac 12)\alpha n^3(\delta(1)-\delta(2)) + n^2\delta(1)\\
        & = \alpha n^4\delta(1) - (t-1)\alpha n^3(\delta(1)-\delta(2))\\& \phantom{=}- \frac 12\alpha n^3(\delta(1)-\delta(2))+ n^2\delta(1)\\
        &\le \alpha n^4\delta(1) - (t-1)\alpha n^3(\delta(1)-\delta(2))\\& \phantom{=}- \frac 12 n^3\delta(1)+ n^2\delta(1)\\
        & < \alpha n^4\delta(1) - (t-1)\alpha n^3(\delta(1)-\delta(2))\text.
    \end{align*}
    In the last weak inequality, we use that $\alpha(\delta(1)-\delta(2))\ge \delta(1)$, and in the final, strict inequality that $n\ge 2$.
	
	Together, this shows that the $t$ candidates that are selected after $b$ have to be of type $c_v$.
	Let $S\subseteq C_V$ be the set of these $t$ candidates.
	
	At this point, for all $c_v\in C_V\setminus S$, $\sco{c_v}{\{b\}\cup S} \le \alpha n^4\delta(1) - t\alpha n^3(\delta(1)-\delta(2)) + 3\delta(1)$ (bounding with the case where none of the three candidates corresponding to neighbors of $v$ in $G$ are in $S$), whereas the gadget candidates have scores
	
	\begin{itemize}
		\item $\sco{g_3}{\{b\}\cup S} = \sco{g_4}{\{b\}\cup S} = 2m\delta(1)$,
		\item $\sco{g_2}{\{b\}\cup S} = (2m + |E|-3t)\delta(1) + 3t\delta(2)$,
		\item $\sco{g_1}{\{b\}\cup S} = (2m + |E|-3t)\delta(1) + 3t\delta(2) + 1$ if $S$ is an independent set, and 
		\item $\sco{g_1}{\{b\}\cup S} \ge (2m + |E|-3t)\delta(1) + 3t\delta(2) + 1 + \left[(\delta(1) - \delta(2)) - (\delta(2)-\delta(3))\right]$ if $S$ is not an independent set.
	\end{itemize}
    
	For the score of $g_2$, note that the voters with approval set $\{b,g_1\}$ only contribute $\delta(2)$.
    For the score of $g_1$ when $S$ is an independent set, since $G$ is cubic, the score is $\delta(2)$ instead of $\delta(1)$ for exactly $3t$ candidates with approval set $\{c_v,c_w,g_1\}$.
    If $S$ is not an independent set, then the score is lowered from $\delta(1)$ to $\delta(2)$ for at most $3t-1$ such voters, whereas it is lowered to $\delta(3)$ for all voters with approval set $\{c_v,c_w,g_1\}$ where both $c_v\in S$ and $c_w\in S$.
	
	Now, it holds that
    \begin{align*}
        &\phantom{=:} 2m\delta(1) \\
        &= \alpha n^4\delta(1) - t\alpha n^3(\delta(1)-\delta(2)) + \frac 12\alpha n^3(\delta(1)-\delta(2))\\
        &\ge \alpha n^4\delta(1) - t\alpha n^3(\delta(1)-\delta(2)) + \frac 12 n^3\delta(1)\\
        & > \alpha n^4\delta(1) - t\alpha n^3(\delta(1)-\delta(2)) + 3\delta(1)\text.
    \end{align*}

    Once again, we use that $\alpha(\delta(1)-\delta(2))\ge \delta(1)$ and $n\ge 2$.
    Therefore, the next candidate to be selected is $g_1$.
	Then, the score of $g_4$ is largest, and $g_4$ is elected.
	
	We arrive at a point where $\sco{g_2}{S\cup \{b,g_1,g_4\}}$ and $\sco{g_3}{S\cup \{b,g_1,g_4\}}$ are bounded by 
 
    \begin{align*}
        &\phantom{=:}2 m \delta(2) + (|E| +2)\delta(1)\\
        &< 2 m \delta(2) + n^2\delta(1)\\
        & = \alpha n^4\delta(2) -\left(t-\frac 12\right) n^3\frac{\delta(1)-\delta(2)}{\delta(1)}\delta(2) + n^2\delta(1)\\
        &< \alpha n^4\delta(2) + n^2\delta(1)\\
        &= \alpha n^3(n-1)\delta(2) + \alpha n^3 \delta(1) \\\phantom{=}&+ \alpha n^3(\delta(2) - \delta(1)) + n^2\delta(1)\\
        &\le \alpha n^3(n-1)\delta(2) + \alpha n^3 \delta(1) - n^3\delta(1) + n^2\delta(1)\\
        & < \alpha n^3(n-1)\delta(2) + \alpha n^3 \delta(1)\text.
    \end{align*}

	However, while not all candidates of type $c_v$ are elected, the score of such a candidate is at least $\alpha n^3(n-1)\delta(2) + \alpha n^3\delta(1)$, where we just count the contribution of voters with approval sets $\{c_v,c_w\}$ and $\{c_v\}$.
	Hence, next, we have to select all candidates of type $c_v$.
	
	It remains to select the final candidate.
	Clearly, $\sco{g_3}{C\setminus \{g_2,g_3\}} = 2m\delta(2) < \sco{g_2}{C\setminus \{g_2,g_3\}}$, and $g_2$ is selected.
	
	Hence, the choice set contains exactly the committee $\{b\}\cup C_V\cup\{g_1,g_2,g_4\}$.

	Now, consider the situation where some voter with approval set $\{g_1,g_3\}$ abstains from the election.
	
	Until $\{b\}$ and a set $S\subseteq C_V$ of $t$ candidates are selected, the Thiele rule represented by $s$ proceeds identically.
	We arrive at a point where the scores of $g_2$ and $g_4$ are the same as without abstention, whereas 
    \begin{itemize}
        \item $\sco{g_3}{\{b\}\cup S} = (2m-1)\delta(1)$,
        \item $\sco{g_1}{\{b\}\cup S} = (2m + |E|-3t)\delta(1) + 3t\delta(2)$ if $S$ is an independent set, and
        \item $\sco{g_1}{\{b\}\cup S} = (2m + |E|-3t)\delta(1) + 3t\delta(2) + \left[(\delta(1) - \delta(2)) - (\delta(2)-\delta(3))\right]$ if $S$ is not an independent set.
    \end{itemize}
  
	Hence, if $S$ is not an independent set, then $g_1$ is still selected next, and the procedure continues as without abstention, yielding the committee $S\cup C_V\cup\{g_1,g_2,g_4\}$.
	
	However, if $S$ is an independent set, then $g_2$ has an equal score as $g_1$ and may be selected first.
	Similar arguments as before show that the subsequent selection order is $g_3$ next, then the remaining candidates from $C_V$, and finally $g_1$.
	
	Consequently, if the source instance is a No-instance, then the choice set is identical after abstention, and there is no incentive to abstain.
	Otherwise, the choice set additionally contains $\{b\}\cup C_V\cup\{g_1,g_2,g_3\}$ and is preferred by a voter with approval set $\{g_1,g_3\}$ (according to Kelly's extension).
\end{proof}

Next, we prove our theorem concerning {\mes} and {\sphrag}.

\hardMES*

We carry out the key technical challenge of the proof for {\mes} in a technical lemma.
The lemma also gives more insight about how the election with {\mes} actually selects candidates:
We provide a unified approach from which we obtain hardness of abstention for both Phase~1 and after the complete execution of {\mes} with a non-trivial Phase~1.
The assertion for {\mes} is a direct consequence.
In the sequel, we extract a proof for {\sphrag} from the same reduction.

\begin{restatable}{lemma}{MESreduction}\label{lem:MESred}
    Consider voting by {\mes} with completion by {\sphrag}. Then, there exists a polynomial-time reduction from an \NP-complete problem such that there exists a voter in the reduced instance with the following properties.
\begin{enumerate}
    \item If the source instance is a No-instance, then the outcome of the election is identical with and without this voter's abstention both after Phase~1 and after Phase~2.
    \item If the source instance is a Yes-instance, then the outcome of the election under {\mes} is preferred by this voter under abstention both after Phase~1 and after Phase~2.
\end{enumerate}
\end{restatable}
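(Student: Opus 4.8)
The plan is to build a single reduced {\mes} instance from \textsc{IndependentSet} on cubic graphs---mirroring the construction used for sequential Thiele rules---so that the instance decomposes into a \emph{vertex part}, whose Phase~1 behaviour encodes a candidate solution, and a constant-size \emph{gadget} on which a designated voter can profit from abstaining precisely when that candidate solution is independent. Given $(G,t)$ with $G=(V,E)$ cubic, I would introduce vertex candidates $c_v$ for $v\in V$, a fixed number of gadget candidates $g_1,g_2,\dots$, and auxiliary ``filler'' voters used only to pin down budgets; the target committee size is $t$ plus the number of gadget candidates that are eventually bought, and the designated abstaining voter approves a pair of gadget candidates (say $g_1$ and $g_3$), in analogy with the Thiele reduction.

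The core of the argument is to calibrate the voter blocks so that Phase~1 of {\mes} proceeds in two clean stages. First I would arrange that the cheapest affordable candidates are always vertex candidates, each bought at a uniform price, so that---using the convention that a rule returns all committees reachable by some tie-breaking order---any size-$t$ vertex set $S$ can be selected first. The key device is an \emph{edge gadget}: for each edge $\{v,w\}\in E$ I add a voter approving $\{c_v,c_w,g_1\}$. Such a voter spends budget on $c_v$ and on $c_w$ whenever these are bought, so the total budget that the edge voters retain for $g_1$ after the vertex stage is a strictly decreasing function of the number of edges internal to $S$, which is $0$ if and only if $S$ is independent. I would then tune the remaining gadget budgets so that the pivotal gadget candidate is afforded (equivalently, wins the $\rho$-comparison) exactly in the independent case: when $S$ is independent an additional gadget candidate---namely $g_3$, which the designated voter approves---can enter the committee, whereas otherwise the gadget resolves to a unique outcome that omits $g_3$.

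With this calibration the four claims follow by case analysis. In the No-case every selectable $S$ contains an internal edge, the pivotal threshold is never crossed, and the gadget resolves identically whether or not the designated voter participates, both at the end of Phase~1 and after the {\sphrag} completion of Phase~2; hence the outcome is unchanged by abstention. In the Yes-case, abstention by the voter approving $\{g_1,g_3\}$ removes exactly the marginal budget contribution that keeps $g_3$ out when $S$ is independent, so the abstention outcome additionally contains a committee with $g_3$; since that committee gives the voter two approved gadget candidates against one in the participation committee, while every abstention committee weakly helps her relative to every participation committee, Kelly's extension yields $f(A_{-i},k)\succ_i f(A,k)$. I would verify this separately at the Phase~1 cut and after Phase~2 by carrying the residual budgets into the {\sphrag} completion and checking that the relevant comparison (e.g.\ $g_3$ versus $g_4$) is decided by the same marginal contribution.

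The main obstacle is the \emph{global} budget coupling of {\mes}: abstention reduces the electorate from $n$ to $n-1$ and hence raises every voter's initial endowment from $\tfrac{k}{n}$ to $\tfrac{k}{n-1}$, in addition to deleting the abstainer's own contribution to $g_1$ and $g_3$. I must ensure this uniform shift neither disturbs the vertex stage nor accidentally flips the gadget threshold in the No-case. The plan is to absorb it by adding sufficiently many filler voters and candidates so that the endowment perturbation is dominated by the gap at every non-pivotal decision, leaving the independence-encoding edge budget as the only quantity that can cross the single calibrated threshold. Making this arithmetic exact---simultaneously for the Phase~1 cut and for the Phase~2 completion, so that both parts of \Cref{thm:hardMES} follow from one instance---is where the bulk of the technical work lies; the {\sphrag} statement is then recovered by specialising the same construction to the variant in which budgets accrue uniformly over time.
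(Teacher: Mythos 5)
Your high-level architecture (vertex stage encoding a candidate solution, constant-size gadget, designated voter approving two gadget candidates, absorbing the $\tfrac{k}{n}\to\tfrac{k}{n-1}$ endowment shift via fillers) matches the spirit of the paper, which likewise builds a two-stage instance with six gadget candidates, auxiliary candidates, and Phase-2 fillers $b_1,b_2$, and handles the endowment shift as a negligible perturbation. However, there is a genuine gap in your central mechanism. You claim that the total budget the edge voters (ballots $\{c_v,c_w,g_1\}$) retain for $g_1$ after the vertex stage is strictly decreasing in the number of edges internal to $S$. Under \mes this is false in the regime your construction implicitly assumes: if vertex candidates are bought at a uniform price $\rho_v$ well below the voters' budgets, each purchase of $c_v$ charges each of the three incident edge voters exactly $\rho_v$, so the \emph{total} amount spent by edge voters is $3t\rho_v$ regardless of how many of those incidences fall on the same voter. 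The total retained budget for $g_1$ -- and hence both its affordability ($\sum_i x_r(i)\ge 1$) and its price $\rho(g_1)$ when no voter is capped -- is identical for independent and non-independent $S$. The concavity trick from the Thiele reduction, where a voter's third approved candidate contributes $\delta(3)<\delta(2)$, has no analogue here: \mes payments are per-purchase, not per-ballot-saturation, so double-paying voters do not create a total-budget deficit.

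A distinction can only emerge through the cap $\min(\rho,x_r(i))$, i.e., by driving prices close to the initial endowment so that any voter who pays once is essentially drained. This is exactly what the paper's proof engineers: every purchase costs almost the entire budget of its supporters, so \mes degenerates into counting \emph{full-budget} approvers (the paper says explicitly that the execution becomes identical to \texttt{seqCCAV}), and the pivotal comparison between $g_1$ and $g_4$ after abstention is a count of drained voters ($15\ell$ versus $15t$), which is why the paper reduces from \textsc{RX3C} (via maximal covers of size exactly $t$) rather than \textsc{IndependentSet}. Note that in the draining regime your retained-budget function actually moves in the \emph{opposite} direction from what you state: the number of distinct edge voters who pay is $3t$ minus the internal-edge count, so $g_1$ retains \emph{more} full-budget support when $S$ is \emph{not} independent -- consistent with the Thiele proof, where $g_1$'s marginal score is strictly larger in the non-independent case. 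Your construction could likely be repaired by adopting the draining regime and flipping the calibration accordingly, but as written the quantitative heart of the argument (both the invariance in the ample-budget case and the sign in the drained case) does not go through, and the Phase-1/Phase-2 threshold analysis you defer is precisely where the paper's proof spends its effort.
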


\begin{proof}
    We reduce from the \NP-complete problem \textsc{Restricted Exact Cover by 3-Sets (RX3C)} \citep{Gonz85a}. 
    An instance $(U, \mathcal S)$ consists of a finite universe $U=\{x_1,\dots, x_{3t}\}$ and a family $\mathcal{S}=\{S_1,\dots , S_{3t}\}$ of $3$-subsets of $U$ where each element from $U$ appears in exactly three sets from~$\mathcal{S}$. 
	The question is whether there is a family $\mathcal{S}'\subseteq \mathcal{S}$ which is an exact cover of $U$.
    For each $i\in [3t]$, let $\mathcal S_i = \{S\in \mathcal S\colon x_i\in S\}$.
    Moreover, we say that $\mathcal S'\subseteq \mathcal S$ is a \emph{maximal cover} if $(i)$ for all $S, T\in \mathcal S'$, $S\cap T = \emptyset$ and $(ii)$ for all $S\in \mathcal S\setminus \mathcal S'$, there exists $T\in \mathcal S'$ with $S\cap T\neq \emptyset$.
    In other words, $(U,\mathcal S)$ is a Yes-instance if and only if there exists a maximal cover of size $t$.
    Without loss of generality, we assume that 
    \begin{equation}\label{eq:t-bound}
        t^3 \ge 90 t^2 + 120t + 60\text.
    \end{equation}

    Assume that we are given an instance $(U, \mathcal S)$.
    We construct a reduced instance.

    First, the set of candidates is $C = \{g_i\colon i\in [6]\}\cup \{c_S\colon S\in \mathcal S\}\cup \{a_i\colon i\in [t]\}\cup \{b_1,b_2\}$.

    The role of the candidates is as follows:
    \begin{itemize}
        \item Candidates $g_i$ form a gadget in which abstention might be possible.
        \item Candidates $c_S$ represent the sets in the source instance.
        \item Candidates $a_i$ are auxiliary candidates who can help to create a certain symmetry between the gadget candidates $g_1$ and $g_4$.
        \item Candidates $b_1$ and $b_2$ are used to fill the committee in Phase~2.
    \end{itemize}

    Moreover, we have voters with the following approval sets:
    \begin{itemize}
        \item For each $i\in [t]$, there are $10t^3 + 45t$ voters with approval set $\{a_i\}$ and $15$ voters with approval set $\{g_4,a_i\}$.
        \item There are $7t^3$ voters approving $\{b_1\}$ and $7t^3 - (90 t^2 + 120t + 60)$ voters approving $\{b_2\}$. This is well defined because of \Cref{eq:t-bound}.
        \item For each $s\in S$, there are $10t^3$ voters with approval set $\{c_S\}$.
        \item For each $i\in [3t]$, there are $5$ voters with approval set $\{c_S\colon S\in \mathcal S_i\}\cup \{g_1\}$ and $15t$ voters with approval set $\{c_S\colon S\in \mathcal S_i\}$.
        \item One voter approving $\{g_1,g_2,g_3\}$. This voter will be used for the potential abstention.
        \item The following voters that approve only gadget candidates:
        \begin{itemize}
            \item $4t^3 + 30t + 13$ voters approving $\{g_1,g_4\}$,
            \item $6t^3$ voters approving $\{g_1\}$,
            \item $3t^3 + 11$ voters for each of the approval sets $\{g_2\}$ and $\{g_3\}$,
            \item $12$ voters for each of the approval sets $\{g_5\}$ and $\{g_6\}$,
            \item $7t^3 + 30t$ voters for each of the approval sets $\{g_2,g_5\}$ and $\{g_3,g_6\}$, and
            \item $3t^3$ voters for each of the approval sets $\{g_4,g_5\}$ and $\{g_4,g_6\}$.
        \end{itemize}
    \end{itemize}
    
    \begin{figure}
        \centering
        \resizebox{1\columnwidth}{!}{
        \begin{tikzpicture}
            \node (g1) at (0,6)  {};
            \node (g2) at (-3,5) {};
            \node (g3) at (3,5)  {};
            \node (g4) at (0,0)  {};
            \node (g5) at (-3,2) {};
            \node (g6) at (3,2)  {};

            \draw[TUMblue, fill = TUMblue, fill opacity = .2] (g1) ellipse (1.75cm and .8cm);
            \draw[TUMblue, fill = TUMblue, fill opacity = .2] ($(g2)+(-.45,0)$) ellipse (1.75cm and .8cm);
            \draw[TUMblue, fill = TUMblue, fill opacity = .2] ($(g3)+(.45,0)$) ellipse (1.75cm and .8cm);
            \draw[TUMblue, fill = TUMblue, fill opacity = .2] ($(g5)+(-.45,0)$) ellipse (1.75cm and .8cm);
            \draw[TUMblue, fill = TUMblue, fill opacity = .2] ($(g6)+(.45,0)$) ellipse (1.75cm and .8cm);
            \draw[TUMblue, fill = TUMblue, fill opacity = .2] ($(g1)!.5!(g4)$) ellipse (.8cm and 4cm);
            \draw[TUMblue, fill = TUMblue, fill opacity = .2] ($(g2)!.5!(g5)$) ellipse (.8cm and 2.5cm);
            \draw[TUMblue, fill = TUMblue, fill opacity = .2] ($(g3)!.5!(g6)$) ellipse (.8cm and 2.5cm);
            \draw[rotate = -56, TUMblue, fill = TUMblue, fill opacity = .2] ($(g4)!.5!(g6)$) ellipse (.8cm and 2.5cm);
            \draw[rotate = 56, TUMblue, fill = TUMblue, fill opacity = .2] ($(g4)!.5!(g5)$) ellipse (.8cm and 2.5cm);
            \draw[ultra thick,TUMgreen] \convexpath{g1, g3, g2}{0.7cm};

            \node[align = center] at ($(g1)!.5!(g4)$) {$4t^3 + 30t$\\$+13$};
            \node at ($(g1)+(1.15,0)$)   {$6t^3$};
            \node at ($(g2)+(-1.4,0)$)   {$3t^3+11$};
            \node at ($(g3)+(1.4,0)$)   {$3t^3+11$};
            \node at ($(g5)+(-1.2,0)$)   {$12$};
            \node at ($(g6)+(1.2,0)$)   {$12$};
            \node at ($(g2)!.5!(g5)$) {$7t^3 + 30t$};
            \node at ($(g3)!.5!(g6)$) {$7t^3 + 30t$};
            \node at ($(g5)!.5!(g4)$) {$3t^3$};
            \node at ($(g6)!.5!(g4)$) {$3t^3$};
            \foreach \i in {1,2,3,4,5,6}
            \node[circle, draw,fill = white] at (g\i) {$g_\i$};
        \end{tikzpicture}
        } 
        \caption{Approval ballots of the gadget candidates. The candidates $g_i$ are the white circles. Each ellipse displays a set of voters with the indicated multiplicity. The green shape indicates the potential abstaining agent with approval set $\{g_1,g_2,g_3\}$.}
        \label{fig:MESgadget}
    \end{figure}
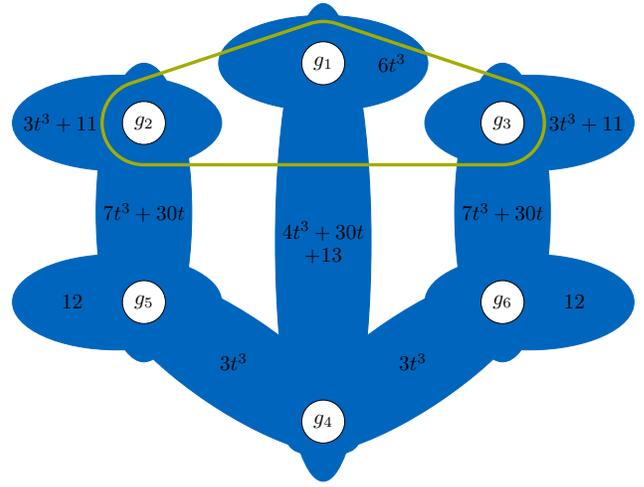
    
    The target committee size is $4t + 5$. 
    
    Note that there is a total of $10t^3(4t+5)$ voters.
    Hence, every voter has an initial budget of $1/(10t^3)$, and it is essentially sufficient for a candidate to be added to the committee that $10t^3$ voters which have not spent budget thus far approve her.
    The approval scores of the candidates are:
    \begin{itemize}
        \item $10t^3 + 45t + 15$ for candidates $a_i$ and $c_S$
        \item $10t^3 + 45t + 14$ for $g_1$,
        \item $10t^3 + 45t + 13$ for $g_4$,
        \item $10t^3 + 30t + 12$ for $g_2$, $g_3$, $g_5$ and $g_6$,
        \item $7t^3$ for $b_1$, and 
        \item $7t^3 - (90 t^2 + 120t + 60)$ for $b_2$.
    \end{itemize}

    The instance is designed in such a way that the execution of {\mes} is identical to an execution of {\sccav}: Whenever a candidate is bought, then the voters approving this candidate have to spend almost their entire budget, and hence the next candidate is a candidate which maximizes the approval score of the voters without an approved candidate in the committee.

    The most relevant overlapping of approval sets concerns gadget agents. The approval scores of the gadget candidates are designed in a way such that either the agents $\{g_1,g_5,g_6\}$ or (and this will only happy under abstention if the source instance is a Yes-instance) the agents $\{g_4,g_2,g_3\}$ are selected.
    The approval sets for the voters that only approve gadget agents are displayed in \Cref{fig:MESgadget} and illustrate this interplay. For instance, if $g_1$ is selected first, then there is not enough budget left to buy $g_4$ in Phase~1.

    Let us consider the exectuation of {\mes} for the reduced instance.
    The first candidates to be selected are of type $a_i$ or $c_S$. 
    If a candidate $a_i$ is selected, then the budget of voters approving candidates $a_i$ or $c_S$ are not affected. 
    If $c_S$ is selected, then the budget of a voter who approves $c_S$ is reduced to $\frac 1{10t^3} - \frac 1 {10t^3 + 45t + 15} = \frac {45t + 15}{10t^3(10t^3+45t + 15)}$, which is negligible for buying further candidates in Phase~1.
    In particular, for $T\in \mathcal S$ with $T\cap S\neq \emptyset$, there are at most $10t^3+30t+10$ supporters of $c_T$ with full budget left. 
    Hence, {\mes} has to select $g_1$ or $g_4$ before candidates $c_T$.

    However, as long as a set $S\in \mathcal S$ has an empty intersection with all sets $T$ of candidates $c_T$ selected thus far, $S$ still is approved by $10t^3 + 45t + 15$ voters with full budget.

    As a consequence, we select first all candidates of type $a_i$ as well as candidates $\{c_S\colon S\in \mathcal S'\}$ for some maximal cover $\mathcal S'\subseteq \mathcal S$. 

    We reach a point, where each candidate $c_S$ for $S\in \mathcal S\setminus \mathcal S'$ is only approved by at most $10t^3 + 30t + 10$ voters with full budget, whereas all other voters approving $c_S$ have negligible budget.

    At this point, the only voters approving a gadget candidate that already have spent budget approve $g_1$ or $g_4$.
    For $g_4$, there are still $10t^3+30t+13$ voters with full budget, but the voters with approval sets $\{a_i, g_4\}$ have spent most of their budget. Hence, there are $15t$ voters that have only left a budget of $\frac {45t + 15}{10t^3(10t^3+45t + 15)}$.
    Next, we consider $g_1$. Therefore, let $\ell := |\mathcal S'|$. 
    Note that $\ell \le t$. Then, there are $10t^3 + 30t + 14 + 15(t-\ell)$ voters with full budget and $15\ell$ voters with a remaining budget of $\frac {45t + 15}{10t^3(10t^3+45t + 15)}$.

    Hence, $g_1$ will be bought next. This will in particular cause that $g_4$ cannot be bought in Phase~1 anymore. Now, the one voter with approval set $\{g_1,g_2,g_3\}$ lowers the number of full budget voters of $g_2$ and $g_3$, and we have to buy $g_5$ and $g_6$ next.

    Now, all the voters supporting gadget candidates have negligible budget. Hence, we complete Phase~1 of {\mes} by buying all remaining candidates of type $c_S$. This is possible because each of them has $10t^3$ supporters only approving them.
    At this point, the remaining total budget for candidates of type $a_i$, $c_S$, or $g_i$ is at most $(10t^3 + 45t + 15)\frac {45t + 15}{10t^3(10t^3+45t + 15)} = \frac {45t + 15}{10t^3}$.
    However, $b_1$ and $b_2$ still accumulate a total budget of $7t^3$ and $7t^3  - (90 t^2 + 120t + 60)\ge 6t^3$ (see \Cref{eq:t-bound}), respectively.
    Hence, in the completion phase starting with these budgets, {\sphrag} selects $b_1$ and $b_2$.

    Together, the unique selected committee is 
    \begin{align*}
        &\{c_S\colon S\in \mathcal S\}\cup \{a_i\colon i\in [t]\}\cup\{b_1,b_2,g_1,g_5,g_6\}\text,
    \end{align*}
    where $b_1$ and $b_2$ are not present, yet, at the end of Phase~1.

    Now, consider the situation where the voter with approval set $\{g_1,g_2,g_3\}$ abstains from the election.
    Now, voters only have a slightly larger budget of $\frac{4t+5}{10t^3(4t-5) -1}$, but the influence of this slight increase is negligible for the outcome of the election.
    We still select first candidates $a_i$ and $\{c_S\colon S\in \mathcal S'\}$ for some maximal cover $\mathcal S'\subseteq \mathcal S$.
    However, we now reach a stage where it may be possible to select $g_4$ instead.

    Let again $\ell := |\mathcal S'|$. At this point, for $g_4$, there are still $10t^3+30t+13$ voters with full budget and $15t$ voters with negligible budget, whereas $g_1$ is approved by $10t^3 + 30t + 13 + 6(t-\ell)$ voters with full budget and $15\ell$ voters with a negligible budget. Note that the negligible budget is of exactly the same size for voters approving $g_1$ and $g_4$.
     Hence, it is possible to select $g_4$ next if and only if $\mathcal S'$ is a maximal cover of size $t$ which happens if and only if the source instance is a Yes-instance.
     If $g_4$ is selected, then $g_2$ and $g_3$ are selected next. Then, Phase~1 is completed by selecting the remaining candidates of type $c_S$, and in Phase~2, $b_1$ and $b_2$ are selected.
     If $g_1$ is selected instead of $g_4$, then the election proceeds as without abstention.

    Hence, if the source instance is a Yes-instance, then the committee
        \begin{align*}
        &\{c_S\colon S\in \mathcal S\}\cup \{a_i\colon i\in [t]\}\cup\{b_1,b_2,g_4,g_2,g_3\}
    \end{align*}

    is also selected under abstention.

     Hence, the abstaining agent benefits from abstention (according to Kelly's extension) if and only if the source instance is a Yes-instance.
\end{proof}

We complete the proof of \Cref{thm:hardMES} by showing the hardness result for {\sphrag} in the next theorem.
Interestingly, the same reduction as in the previous proof also works for {\sphrag}.

\begin{restatable}{theorem}{PhragmenNP}
    For {\sphrag}, it is \NP-hard to decide whether a voter can benefit from abstention.
\end{restatable}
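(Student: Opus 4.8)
The plan is to reuse verbatim the reduction from \textsc{RX3C} built in the proof of \Cref{lem:MESred}, and to argue that {\sphrag} processes the reduced instance with the same selection order and the same Yes/No branching as {\mes} did. The guiding principle is the one already exploited for {\mes}: on this instance every ``main'' candidate ($a_i$, $c_S$, $b_1$, $b_2$, and each $g_j$) is backed by a block of $\Theta(t^3)$ voters who approve essentially only her, while all the combinatorially relevant overlaps involve blocks of size $O(t)$. Hence the rule should behave like {\sccav}, buying at each step the candidate approved by the most voters who are not yet represented. My first step would be to make this precise for {\sphrag} using the load formulation from \Cref{app:notation}: the next candidate minimises $\ell_r(c)=\frac{1+\sum_{i\in N_A(c)}y_{r-1}(i)}{|N_A(c)|}$, and once $c$ is bought the loads of all $i\in N_A(c)$ are raised to $\ell_r(c)$. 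I would show that after a candidate is bought its supporters' loads become large enough that, up to lower-order terms, $\ell_r(c)$ is governed by the number of \emph{fresh} supporters of $c$ (those still at load close to~$0$); consequently {\sphrag} orders candidates by fresh approval score, exactly as the {\mes} execution did.

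With this in hand I would replay the {\mes} analysis step by step. First, {\sphrag} buys all auxiliary candidates $a_i$ together with $\{c_S\colon S\in\mathcal S'\}$ for a maximal cover $\mathcal S'$, because buying $c_S$ lifts the loads of the element-voters it shares with overlapping sets $c_T$ ($T\cap S\neq\emptyset$), stripping those of fresh support, precisely as in \Cref{lem:MESred}. Next comes the decisive comparison between $g_1$ and $g_4$, which I would carry out by comparing $\ell_r(g_1)$ with $\ell_r(g_4)$ at that moment. Without abstention $g_1$ has strictly more fresh supporters than $g_4$, so $\ell_r(g_1)<\ell_r(g_4)$, the rule buys $g_1$, and thereafter $g_5,g_6$ follow while $g_4$ can never again attain minimal load; the committee is then completed by the remaining $c_S$ and, since {\sphrag} fills directly with no separate completion stage, by the low-score fillers $b_1,b_2$, yielding the unique committee containing $\{g_1,g_5,g_6\}$. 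When the voter with ballot $\{g_1,g_2,g_3\}$ abstains, $g_1$ loses exactly one fresh supporter, which equalises the fresh counts of $g_1$ and $g_4$ if and only if $\mathcal S'$ can be chosen of size $t$, i.e.\ an exact cover exists, i.e.\ the source instance is a Yes-instance; in that case $g_4$ may be bought instead, after which $g_2,g_3$ are selected and the committee containing $\{g_2,g_3,g_4\}$ additionally appears. Finally, by Kelly's extension the abstaining voter strictly prefers the abstention outcome exactly when this extra committee appears (she then secures $g_2,g_3$ rather than only $g_1$), so she benefits from abstaining if and only if the instance is a Yes-instance.

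I expect the technical heart to be the load comparison at the $g_1$/$g_4$ branch, for two reasons. First, {\sphrag} has neither a fixed per-voter budget nor the Phase~1/Phase~2 split of {\mes}, so the ``negligible budget'' bookkeeping must be reproved in terms of accumulated loads; I would have to confirm that the small overlapping blocks (of multiplicities such as $15$, $15t$, $30t$ in \Cref{fig:MESgadget}) contribute load that is simultaneously negligible against the $\Theta(t^3)$ fresh blocks, yet decisive enough that a single abstaining voter flips the ordering exactly on Yes-instances. Here the auxiliary candidates $a_i$ are essential: they are engineered so that the accumulated loads carried by the non-fresh supporters of $g_1$ and of $g_4$ match, preserving the exact tie that distinguishes exact covers from merely maximal ones. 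Second, I would need to verify that no off-path candidate---a non-cover $c_T$, the ``wrong'' gadget branch, or a filler $b_1,b_2$---ever attains minimal load out of turn; because {\sphrag} completes the committee in a single process, this replaces the two-phase argument of \Cref{lem:MESred} and is where the load inequalities must be pinned down most carefully.
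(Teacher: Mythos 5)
Your proposal is correct and takes essentially the same route as the paper: the paper likewise reuses the reduction from \Cref{lem:MESred} verbatim and replays the execution for \sphrag, showing that all $a_i$ and the candidates of a maximal cover are bought first, that the $g_1$/$g_4$ decision is tied under abstention exactly when the cover has size $t$, and that the remaining $c_S$ and $b_1,b_2$ complete the committee as in the \mes run. The only cosmetic difference is that you phrase the dynamics via the load formulation $\ell_r(c)$ while the paper tracks the times at which candidates are bought; these are equivalent formulations of \sphrag (both given in the paper's appendix), and your identification of the decisive points---the matched spent-budget blocks of $g_1$ and $g_4$ engineered via the $a_i$, and the single-voter tie-flip on Yes-instances---is exactly the paper's argument.
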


\begin{proof}
    We can use exactly the same reduction as in the proof of \Cref{lem:MESred}.
    We briefly outline, how the election with {\sphrag} proceeds on the reduced instance.

    The first time where voters have accumulated enough money to buy an alternative is at time $\frac 1{10t^3 + 45t + 15}$.
    At this time, {\sphrag} simultaneously buys all candidates of type $a_i$ as well as candidates $\{c_S\colon S\in \mathcal S'\}$ for some maximal cover $\mathcal S'\subseteq \mathcal S$.

    After these candidates are bought, at most $10t^3 + 30t + 10$ voters approving $c_T$ for $T\notin \mathcal S'$ have their money left, but there are still at least $10t^3 + 30t + 11$ voters approving each of the gadget candidates.
    Hence, at least at time $\frac 1{10t^3 + 30t + 11}$, we have enough budget to buy gadget candidates.
    Note that at this point, assuming that $t$ is large enough, the slightly more voters approving candidates of type $c_T$ have not accumulated enough money to buy further such candidates.

    Hence, until time $\frac 1{10t^3 + 30t + 11}$, we buy $\{g_1,g_5,g_6\}$ if no voter abstains or $\mathcal S'$ is of size less than $t$.
    If the voter approving $\{g_1,g_2,g_3\}$ abstains, then any of $g_1$ and $g_4$ may be bought at the same time, namely at time $\frac 1{10t^3 + 30t + 13}$, but only if $\mathcal S'$ is of size exactly $t$.
    Also, the accumulated budget only suffices to buy one of them.
    This offers the possibility to buy $g_4$ first and then $g_2$ and $g_3$ afterwards if the source instance is a Yes-instance.

    After the gadget candidates have been bought, we buy the remaining candidates of type $c_S$ latest at time  $\frac 1{10t^3 + 15t}$.
    In principal, we could omit the auxiliary candidates $b_1$ and $b_2$ from the reduction and have a committee of size $2$ less compared to the election with {\mes}. If we want to maintain them in the reduction, then they are the last candidates to be bought latest at time  $\frac 1{7t^3 - (90t^2 + 120t + 60)}$.

    Hence, the election with {\sphrag} leads to the exact same outcome as with {\mes} and we conclude that the voter with approval set $\{g_1,g_2,g_3\}$ can improve her outcome by abstention (according to Kelly's extension) if and only if the source instance was a Yes-instance.
\end{proof}

Finally, we consider the corollary for the verification and approval guarantee on selected committees.

\Verification*
\begin{proof}
    The result directly follows from the reductions in the proofs of \Cref{thm:SeqThieleNP,thm:hardMES}.
    For the first result, we omit the abstaining agent and question whether the committee that is always selected is the unique winner. For the second result, we want the abstaining agent to be present, so we omit the critical gadget alternative from some other voter approving gadget candidates.
\end{proof}


\begin{thebibliography}{38}
\providecommand{\natexlab}[1]{#1}

\bibitem[{Aziz et~al.(2015)Aziz, Gaspers, Gudmundsson, Mackenzie, Mattei, and
  Walsh}]{AGG+15a}
Aziz, H.; Gaspers, S.; Gudmundsson, J.; Mackenzie, S.; Mattei, N.; and Walsh,
  T. 2015.
\newblock Computational Aspects of Multi-Winner Approval Voting.
\newblock In \emph{Proceedings of the 14th International Conference on
  Autonomous Agents and Multiagent Systems (AAMAS)}, 107--115.

\bibitem[{Botan(2021)}]{Bota21a}
Botan, S. 2021.
\newblock Manipulability of Thiele Methods on Party-List Profiles.
\newblock In \emph{Proceedings of the 20th International Conference on
  Autonomous Agents and Multiagent Systems (AAMAS)}, 223--231.

\bibitem[{Brandl et~al.(2019)Brandl, Brandt, Geist, and Hofbauer}]{BBGH18a}
Brandl, F.; Brandt, F.; Geist, C.; and Hofbauer, J. 2019.
\newblock Strategic Abstention based on Preference Extensions: {P}ositive
  Results and Computer-Generated Impossibilities.
\newblock \emph{Journal of Artificial Intelligence Research}, 66: 1031--1056.

\bibitem[{Brandl, Brandt, and Hofbauer(2019)}]{BBH15c}
Brandl, F.; Brandt, F.; and Hofbauer, J. 2019.
\newblock Welfare Maximization Entices Participation.
\newblock \emph{Games and Economic Behavior}, 14: 308--314.

\bibitem[{Brandt, Geist, and Peters(2017)}]{BGP16c}
Brandt, F.; Geist, C.; and Peters, D. 2017.
\newblock Optimal Bounds for the No-Show Paradox via {SAT} Solving.
\newblock \emph{Mathematical Social Sciences}, 90: 18--27.
\newblock Special Issue in Honor of Herv\'e Moulin.

\bibitem[{Bredereck et~al.(2021)Bredereck, Faliszewski, Kaczmarczyk,
  Niedermeier, Skowron, and Talmon}]{BFK+21a}
Bredereck, R.; Faliszewski, P.; Kaczmarczyk, A.; Niedermeier, R.; Skowron, P.;
  and Talmon, N. 2021.
\newblock Roubstness among multiwinner voting rules.
\newblock \emph{Artificial Intelligence}, 290: 103403.

\bibitem[{Brill et~al.(2023)Brill, Dindar, Israel, Lang, Peters, and
  {Schmidt-Kraepelin}}]{BDI+23a}
Brill, M.; Dindar, H.; Israel, J.; Lang, J.; Peters, J.; and
  {Schmidt-Kraepelin}, U. 2023.
\newblock Multiwinner Voting with Possibly Unavailable Candidates.
\newblock In \emph{Proceedings of the 37th AAAI Conference on Artificial
  Intelligence (AAAI)}.
\newblock Forthcoming.

\bibitem[{Brill et~al.(2017)Brill, Freeman, Janson, and Lackner}]{BFJL16a}
Brill, M.; Freeman, R.; Janson, S.; and Lackner, M. 2017.
\newblock Phragm\'{e}n's Voting Methods and Justified Representation.
\newblock In \emph{Proceedings of the 31st AAAI Conference on Artificial
  Intelligence (AAAI)}, 406--413.

\bibitem[{Brill, Laslier, and Skowron(2018)}]{BLS18a}
Brill, M.; Laslier, J.-F.; and Skowron, P. 2018.
\newblock Multiwinner Approval Rules as Apportionment Methods.
\newblock \emph{Journal of Theoretical Politics}, 30(3): 358--382.

\bibitem[{Cevallos and Stewart(2021)}]{CeSt21a}
Cevallos, A.; and Stewart, A. 2021.
\newblock A verifiably secure and proportional committee election rule.
\newblock In \emph{Proceedings of the 3rd ACM Conference on Advances in
  Financial Technologies}, 29--42.

\bibitem[{Delemazure et~al.(2023)Delemazure, Demeulemeester, Eberl, Israel, and
  Lederer}]{DDE+22a}
Delemazure, T.; Demeulemeester, T.; Eberl, M.; Israel, J.; and Lederer, P.
  2023.
\newblock Strategyproofness and Proportionality in Party-approval Multiwinner
  Elections.
\newblock In \emph{Proceedings of the 37th AAAI Conference on Artificial
  Intelligence (AAAI)}, 5591--5599.

\bibitem[{Dong and Lederer(2023)}]{DoLe22a}
Dong, C.; and Lederer, P. 2023.
\newblock Characterizations of Sequential Valuation Rules.
\newblock In \emph{Proceedings of the 22nd International Conference on
  Autonomous Agents and Multiagent Systems (AAMAS)}, 1697--1705.

\bibitem[{Dong and Lederer(2024)}]{DoLe23a}
Dong, C.; and Lederer, P. 2024.
\newblock Refined Characterizations of Approval-Based Committee Scoring Rules.
\newblock In \emph{Proceedings of the 38th AAAI Conference on Artificial
  Intelligence (AAAI)}.
\newblock Forthcoming.

\bibitem[{Duddy(2014)}]{Dudd14b}
Duddy, C. 2014.
\newblock Condorcet's principle and the strong no-show paradoxes.
\newblock \emph{Theory and Decision}, 77(2): 275--285.

\bibitem[{Faliszewski, Gawron, and Kusek(2022)}]{FGK22b}
Faliszewski, P.; Gawron, G.; and Kusek, B. 2022.
\newblock Using Multiwinner Voting to Search for Movies.
\newblock In \emph{Proceedings of the 19th European Conference on Multi-Agent
  Systems (EUMAS)}, Lecture Notes in Computer Science (LNCS), 116--133.
  Springer-Verlag.

\bibitem[{Faliszewski et~al.(2017)Faliszewski, Skowron, Slinko, and
  Talmon}]{FSST17a}
Faliszewski, P.; Skowron, P.; Slinko, A.; and Talmon, N. 2017.
\newblock Multiwinner Voting: A New Challenge for Social Choice Theory.
\newblock In Endriss, U., ed., \emph{Trends in Computational Social Choice},
  chapter~2.

\bibitem[{Fishburn(1972)}]{Fish72a}
Fishburn, P.~C. 1972.
\newblock Even-chance lotteries in social choice theory.
\newblock \emph{Theory and Decision}, 3(1): 18--40.

\bibitem[{G{\"a}rdenfors(1976)}]{Gard76a}
G{\"a}rdenfors, P. 1976.
\newblock Manipulation of Social Choice Functions.
\newblock \emph{Journal of Economic Theory}, 13(2): 217--228.

\bibitem[{Garey and Johnson(1979)}]{GaJo79a}
Garey, M.~R.; and Johnson, D.~S. 1979.
\newblock \emph{Computers and Intractability: A Guide to the Theory of
  NP-Completeness}.
\newblock W. H. Freeman.

\bibitem[{Gawron and Faliszewski(2022)}]{GaFa22a}
Gawron, G.; and Faliszewski, P. 2022.
\newblock Using Multiwinner Voting to Search for Movies.
\newblock In \emph{Proceedings of the 19th European Conference on Multi-Agent
  Systems (EUMAS)}, Lecture Notes in Computer Science (LNCS), 134--151.
  Springer-Verlag.

\bibitem[{Hofbauer(2019)}]{Hofb19a}
Hofbauer, J. 2019.
\newblock \emph{Should {I} Stay or Should {I} Go? {T}he No-Show Paradox in
  Voting and Assignment}.
\newblock Ph.D. thesis, Technische Universit{\"a}t M{\"u}nchen.

\bibitem[{Janeczko and Faliszewski(2023)}]{JaFa23b}
Janeczko, L.; and Faliszewski, P. 2023.
\newblock Ties in Multiwinner Approval Voting.
\newblock In \emph{Proceedings of the 32nd International Joint Conference on
  Artificial Intelligence (IJCAI)}, 2765--2773.

\bibitem[{Janson(2016)}]{Jans16a}
Janson, S. 2016.
\newblock Phragm{\'e}n's and {T}hiele's election methods.
\newblock Technical Report arXiv:1611.08826 [math.HO], arXiv.org.

\bibitem[{Jimeno, P\'erez, and Garc\'ia(2009)}]{JPG09a}
Jimeno, J.~L.; P\'erez, J.; and Garc\'ia, E. 2009.
\newblock An extension of the {M}oulin {N}o {S}how {P}aradox for voting
  correspondences.
\newblock \emph{Social Choice and Welfare}, 33(3): 343--459.

\bibitem[{Kelly(1977)}]{Kell77a}
Kelly, J.~S. 1977.
\newblock Strategy-Proofness and Social Choice Functions Without
  Single-Valuedness.
\newblock \emph{Econometrica}, 45(2): 439--446.

\bibitem[{Kluiving et~al.(2020)Kluiving, {de Vries}, Vrijbergen, Boixel, and
  Endriss}]{KdV+20a}
Kluiving, B.; {de Vries}, A.; Vrijbergen, P.; Boixel, A.; and Endriss, U. 2020.
\newblock Analysing Irresolute Multiwinner Voting Rules with Approval Ballots
  via {SAT} Solving.
\newblock In \emph{Proceedings of the 24th European Conference on Artificial
  Intelligence (ECAI)}.

\bibitem[{Lackner and Skowron(2021)}]{LaSk21a}
Lackner, M.; and Skowron, P. 2021.
\newblock Consistent Approval-Based Multi-Winner Rules.
\newblock \emph{Journal of Economic Theory}, 192: 105173.

\bibitem[{Lackner and Skowron(2023)}]{LaSk22b}
Lackner, M.; and Skowron, P. 2023.
\newblock \emph{Multi-Winner Voting with Approval Preferences}.
\newblock Springer-Verlag.

\bibitem[{Mora and Oliver(2015)}]{MoOl15a}
Mora, X.; and Oliver, M. 2015.
\newblock Eleccions mitjan{\c c}ant el vot d'aprovaci{\'o}. {E}l m{\`e}tode de
  {P}hragm{\'e}n i algunes variants.
\newblock \emph{Butllet{\'\i} de la Societat Catalana de Matem{\`a}tiques},
  30(1): 57--101.

\bibitem[{Moulin(1988)}]{Moul88b}
Moulin, H. 1988.
\newblock Condorcet's Principle implies the No Show Paradox.
\newblock \emph{Journal of Economic Theory}, 45(1): 53--64.

\bibitem[{P\'erez(2001)}]{Pere01a}
P\'erez, J. 2001.
\newblock The {S}trong {N}o {S}how {P}aradoxes are a common flaw in {C}ondorcet
  voting correspondences.
\newblock \emph{Social Choice and Welfare}, 18(3): 601--616.

\bibitem[{P{\'e}rez, Jimeno, and Garc{\'\i}a(2010)}]{PJG10a}
P{\'e}rez, J.; Jimeno, J.~L.; and Garc{\'\i}a, E. 2010.
\newblock No Show Paradox in Condorcet k-voting Procedure.
\newblock \emph{Group Decision and Negotiation}, 21(3): 291--303.

\bibitem[{Peters(2018)}]{Pete18a}
Peters, D. 2018.
\newblock Proportionality and Strategyproofness in Multiwinner Elections.
\newblock In \emph{Proceedings of the 17th International Conference on
  Autonomous Agents and Multiagent Systems (AAMAS)}, volume 1549--1557.

\bibitem[{Peters and Skowron(2020)}]{PeSk20a}
Peters, D.; and Skowron, P. 2020.
\newblock Proportionality and the Limits of Welfarism.
\newblock In \emph{Proceedings of the 21nd ACM Conference on Economics and
  Computation (ACM-EC)}, 793--794.

\bibitem[{Phragm{\'e}n(1895)}]{Phra95a}
Phragm{\'e}n, E. 1895.
\newblock \emph{Proportionella val. En valteknisk studie}.
\newblock Svenska sp{\"o}rsm{\aa}l 25. Lars H{\"o}kersbergs f{\"o}rlag,
  Stockholm.

\bibitem[{S\'{a}nchez-Fern\'andez and Fisteus(2019)}]{SaFi19a}
S\'{a}nchez-Fern\'andez, L.; and Fisteus, J.~A. 2019.
\newblock Monotonicity axioms in approval-based multi-winner voting rules.
\newblock In \emph{Proceedings of the 18th International Conference on
  Autonomous Agents and Multiagent Systems (AAMAS)}, 485--493.

\bibitem[{Thiele(1895)}]{Thie95a}
Thiele, T.~N. 1895.
\newblock Om Flerfoldsvalg.
\newblock \emph{Oversigt over det Kongelige Danske Videnskabernes Selskabs
  Forhandlinger}, 415--441.

\bibitem[{Young(1975)}]{Youn75a}
Young, H.~P. 1975.
\newblock Social Choice Scoring Functions.
\newblock \emph{SIAM Journal on Applied Mathematics}, 28(4): 824--838.

\end{thebibliography}
\end{document}